\newtheorem{thm}{Theorem}[section]
\newtheorem{cor}[thm]{Corollary}
\newtheorem{lem}[thm]{Lemma}
\newtheorem{prop}[thm]{Proposition}
\theoremstyle{definition}
\theoremstyle{remark}
\newtheorem{rem}[thm]{Remark}
\numberwithin{equation}{section}
\numberwithin{thm}{section}
\newcommand{\eps}{\varepsilon}
\newcommand{\lsm}{\lesssim}
\newcommand{\R}{{\mathbb{R}}}
\newcommand{\ed}{\end {document}}
\newcounter{smalllist}
\title{On a nonlocal aggregation model with nonlinear diffusion}
\author{Dong Li}
\address{School of Mathematics, Institute For Advanced Study, Princeton NJ 08544}%
\email{dongli@ias.edu}
\author{Xiaoyi Zhang}
\address{School of Mathematics, Institute For Advanced Study, Princeton NJ 08544}%
\email{xiaoyi@ias.edu}
\date{submitted Jun 2008}
\begin{document}
\maketitle
\begin{abstract}
We consider a nonlocal aggregation equation with nonlinear diffusion
which arises from the study of biological aggregation dynamics. As a
degenerate parabolic problem, we prove the well-posedness,
continuation criteria and smoothness of local solutions. For
compactly supported nonnegative smooth initial data we prove that the
gradient of the solution develops $L_x^\infty$-norm blowup in finite time.
\end{abstract}

\section{Introduction and main results}
In this paper we consider the following evolution equation with nonlinear diffusion:
\begin{align}\label{eq1}
\begin{cases}
 \partial_t u + \partial_x (u \partial_x K * u) = r\partial_x (u^2 \partial_x u), \quad (t,x)
  \in  (0,\infty)\times\R,\\
 u(0,x)=u_0(x),
\end{cases}
\end{align}
where $K$ is an even function of $x$ and has a Lipschitz point at the origin, e.g. $K(x) =e^{-|x|}$. 
Here $*$ denotes the usual spatial convolution on $\R$.
The parameter $r> 0$ measures
the strength of the nonlinear diffusion term. The function $u=u(t,x)$ represents population density in biology or particle density
in material science. 
(see, for example, \cite{TBL06} \cite{EkWG98} \cite{HP05} \cite{HP06} \cite{Mek99} \cite{ToBe04} \cite{LRC01} \cite{TT98}
\cite{FGLO99} \cite{Mu02} \cite{O80}). 
This equation was used in the study of biological aggregation such as insect swarms,
fish schools and bacterial colonies. It is first derived by Bertozzi, Lewis and Topaz \cite{TBL06} as a modification (more precisely, the addition of the
density-dependent term on the RHS of \eqref{eq1}) of an earlier classical model of Kawasaki \cite{K78}.
According to \cite{TBL06}, this modification gives rise to biologically meaningful clumping solutions
(i.e. densities with compact support and sharp edges). For other similar one-dimensional models and their biological applications we refer
the readers to \cite{Mek99} \cite{EkWG98} \cite{VCFH99} \cite{PD84} \cite{HM89} \cite{I84} \cite{I85} \cite{IN87} \cite{K78} \cite{MY82}
\cite{NM83} and the references therein for more extensive background and reviews.

To understand the biological meaning of each term, one can rewrite
\eqref{eq1} as the classical continuity equation
\begin{align}
 \partial_t u + \partial_x ( u v) =0,
\end{align}
where the velocity $v$ is related to the density $u$ by
\begin{align*}
 v &= \partial_x (K*  u) - r u\partial_x u \\
 &=: v_a +v_d.
\end{align*}
Here $v_a$ is called the attractive velocity since as explained in \cite{TBL06} individuals aggregate by climbing gradients of the sensing function
$s=K*u$. Due to the spatial convolution $v_a$ is a nonlocal transformation of the density $u$.
The second term $v_d$ is called the dispersal(repulsive) velocity and is a spatially local function of both the population and
the population gradient. Biologically $v_d$ represents the anti-crowding mechanism which operates in the opposite direction of population gradient
and decreases as the population density drops. These constitutive relations of $v_a$, $v_d$, $u$ are natural
 in view of the basic biological assumption that aggregation occurs on a longer spatial scale than dispersal.

In the mathematics literature, the aggregation equations which have
similar forms to \eqref{eq1} have been studied extensively
\cite{BeLa07} \cite{BoVe05} \cite{BoVe06} \cite{BuDiF06}
\cite{BuCaMo07} \cite{La07} \cite{ToBe04}. The case of \eqref{eq1}
with $r=0$ and general choices of the kernel $K$ was considered by
Bodnar and Velazquez \cite{BoVe06}. There by an ODE argument the
authors proved the local well-posedness of \eqref{eq1}
\textit{without the density-dependent term} for $C^1$ initial data.
For a generic class of choices of the kernel $K$ and initial data,
they proved by comparing with a Burgers-like dynamics, the finite
time blowup of the $L_x^\infty$-norm of the solution. Burger and Di
Francesco \cite{BuDiF06} studied a class of one-dimensional
aggregation equations of the form
\begin{align*}
 \partial_t u = \partial_x \left( u \partial_x (a(u) -K*u +V ) \right), \quad\text{ in }\,
{(0,\infty)\times\R},
\end{align*}
where $V:\, \R\to \R$ is a given external potential and the nonlinear diffusion term $a(\rho)$ is assumed to be either $0$ or a strictly increasing function of $\rho$. In the case of no diffusion ($a\equiv 0$) they proved the existence of stationary solutions and investigated the weak
convergence of solutions toward the steady state. In the case of sufficiently small diffusion ($a(\rho)=\epsilon \rho^2$)
they proved the existence of stationary solutions with small support. Burger, Capasso and  Morale \cite{BuCaMo07}
studied the well-posedness of an equation similar to \eqref{eq1} but with a different diffusion term:
\begin{align*}
 \partial_t u +\nabla \cdot (u \nabla K* u) = div( u \nabla u),\quad \text{in}\;
  {(0,T)\times\R^d}.
  \end{align*}
For initial data $u_0 \in L^1_x(\R^d) \cap L^\infty_x(\R^d)$ with
$u_0^2 \in H^1_x(\R^d)$, they proved the existence of a weak
solution by using the standard Schauder's method. Moreover the
uniqueness of entropy solutions was also proved there. In connection
with the problem we study here, Laurent  \cite{La07} has studied in
detail the case of \eqref{eq1} \textit{without density-dependent
diffusion} (i.e. $r=0$ ) and proved local and global existence
results for a class of kernels $K$ with $H_x^s(s\ge 1)$ initial
data. More recently  Bertozzi and Laurent \cite{BeLa07} have
obtained finite-time blowup of solutions for the case of \eqref{eq1}
without diffusion (i.e. $r=0$) in $\R^d (d\ge 2)$ assuming compactly
supported radial initial data with highly localized support. 
Li and Rodrigo \cite{LR08} \cite{LiRod08prep} \cite{LiRod09}
studied the case of \eqref{eq1}
with fractional dissipation and proved finite-time blowup or global
wellposedness in various situations. We refer to \cite{LiRodZhang09} \cite{LiZhang09}
for the cases with singular kernels and further detailed studies
concerning sharp asymptotics and regularity of solutions.
We also mention that Bertozzi and Brandman \cite{BB08}
recently constructed $L_x^1\cap L_x^\infty$ weak solutions to \eqref{eq1} in $R^d$ ($d\ge 2$) and
with no dissipation ($r=0$) by following Yudovich's work on incompressible Euler equations
\cite{Yo63}. We refer the interested readers to \cite{PD84} \cite{HM89} \cite{I84} \cite{I85} \cite{IN87}
\cite{K78} \cite{MY82} \cite{NM83} and the references therein for some further rigorous studies.

From the analysis point of view, equation \eqref{eq1} is also connected with a general class of degenerate parabolic equations known as porous
medium equations, which takes the form
\begin{align} \label{eq_poro}
 \partial_t u= \partial_x ( u^m \partial_x u),\quad (t,x)\in  (0,\infty)\times\R,
\end{align}
where $m$ is a real number. These equations describe the ideal gas
flow through a homogeneous porous medium and other physical
phenomena in gas dynamics and plasma physics \cite{M37} \cite{ZR66}
\cite{Ar86}. Oleinik, Kalashnikov and Chzhou \cite{OKC58} proved the
existence and uniqueness of a global weak solution of
\eqref{eq_poro}. H\"older continuity of the weak solution is
established in \cite{Ar69} \cite{Kr69} \cite{Gi76} \cite{CF79}. As
for higher regularities, the Lipschitz continuity of the pressure
variable $v=\tfrac{m+1}m u$ is shown in \cite{Ar69} \cite{Di79}
\cite{Be83} \cite{AC86} \cite{CVW87}. Aronson and Vazquez
\cite{AV87} showed that $v$ becomes $C^\infty$ outside of the free
boundary after some waiting time. For derivative estimates of the
solution $u$, a local solution of \eqref{eq_poro} in
$W^{1,\infty}_x(\R)$ is constructed by Otani and Sugiyama
\cite{OS96}. In the case of \eqref{eq_poro} with $m$ being an even
natural number, Otani and Sugiyama \cite{OS01} proved the existence
of smooth solutions. We refer the interested readers to
\cite{AABL01} \cite{AG93} \cite{Be72} \cite{BCW00} \cite{BV05}
\cite{CC95} \cite{CF80} \cite{CV99} \cite{CVW87} \cite{CCV06}
\cite{CEL84} \cite{CL71} \cite{Ko99} \cite{Va07} and references
therein for more detailed studies and expositions.

Our starting point of the analysis of equation \eqref{eq1}
is to treat it as a degenerate parabolic problem with a nonlocal flux term. We
focus on constructing and analyzing classical solutions of \eqref{eq1}. The bulk of this
 paper is devoted to proving the existence and uniqueness of classical solutions to \eqref{eq1},
 which is done in section 2 and part of section 3. In the
final part of section 3, we prove the continuation and blowup criteria of solutions.
In the last section we prove that any smooth
initial data with compact support will lead to blowup of the gradient in finite time.
The analysis developed in this work can be extended to treat the $d$-dimensional ($d\ge 2$) 
case of \eqref{eq1} which we will address in a future publication.

We now state more precisely our main results. The first theorem establishes the existence of smooth local solutions for initial
data which is not necessarily nonnegative.

\begin{thm}[Existence and uniqueness of smooth local solution]  \label{thm1}
 Assume $u_0 \in \bigcap_{m=0}^\infty H^m_x(\mathbb R)$ ($u_0$ is not necessarily nonnegative). Then there exists a positive
$T_0 =T_0(\|u_0\|_{2}+\|\partial_x u_0 \|_{\infty})$ such that
\eqref{eq1} has a unique solution $u \in C^\infty([0,T_0]\times
\R)$. In particular $u \in C([0,T_0]\cap H^k_x)$ for any $k\ge 0$.
If $u_0\ge 0$, then $u(t)\ge 0$ for all $0\le t\le T_0$.
\end{thm}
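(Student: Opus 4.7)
The plan is to prove Theorem \ref{thm1} by the standard regularization-energy-compactness scheme for degenerate parabolic equations with a nonlocal drift. First I would introduce the viscous approximation
\begin{align*}
\partial_t u^\eps + \partial_x(u^\eps \partial_x K * u^\eps) = r\partial_x\bigl((u^\eps)^2 + \eps)\partial_x u^\eps\bigr),
\qquad u^\eps(0,\cdot)=u_0,
\end{align*}
which for each $\eps>0$ is a uniformly parabolic quasilinear equation. Standard quasilinear parabolic theory (e.g.\ Ladyzhenskaya--Solonnikov--Ural'ceva) gives a smooth local solution $u^\eps$ on some $\eps$-dependent time interval. The goal is then to show that these solutions live on a common interval $[0,T_0]$ with $T_0=T_0(\|u_0\|_2+\|\partial_x u_0\|_\infty)$ and satisfy $\eps$-uniform bounds in every $H^k_x$.

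Next I would derive the two basic a priori estimates. For the $L^2$ bound, multiply the equation by $u^\eps$ and integrate; the dissipation $-r\int (u^\eps)^2(\partial_x u^\eps)^2\,dx$ is nonpositive, while the nonlocal convection term gives $-\tfrac12\int (u^\eps)^2\,\partial_x^2 K*u^\eps\,dx$. Since $\partial_x K\in L^1\cap L^\infty$ the distribution $\partial_x^2 K$ decomposes as a bounded $L^1$ kernel plus a multiple of $\delta_0$, so this term is controlled by $\|K\|_{L^1}\|u^\eps\|_2^2\|u^\eps\|_\infty+\|u^\eps\|_3^3$, yielding a short-time $L^2$ bound via Gronwall. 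For the $W^{1,\infty}$ bound, set $w^\eps=\partial_x u^\eps$ and differentiate the equation in $x$; after using the identity $\partial_x^2(u^2\partial_x u)=\partial_x^2(\tfrac13(u^3)')$ one checks that at a point of maximum of $(w^\eps)^2$ the dissipative second-derivative term has a good sign, while the remaining terms involve only $w^\eps$, $u^\eps$ and convolutions with $K_1=\partial_x K$ (which are bounded by $\|u^\eps\|_2$ via Young's inequality). A Gronwall-type inequality then yields a uniform bound $\|w^\eps\|_\infty\le C$ on a time interval $[0,T_0]$ depending only on $\|u_0\|_2+\|\partial_x u_0\|_\infty$.

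With these two bounds in hand, the $1$-dimensional embedding $L^2\cap W^{1,\infty}\hookrightarrow L^\infty$ gives $\|u^\eps\|_\infty\le C$ on $[0,T_0]$. I would then propagate higher Sobolev regularity by induction on $m$: apply $\partial_x^m$ to the equation, pair with $\partial_x^m u^\eps$ and integrate. The principal contribution from the diffusion is $-r\int (u^\eps)^2(\partial_x^{m+1}u^\eps)^2\,dx\le 0$, and all commutator terms can be estimated by Moser/Kato--Ponce inequalities in terms of $\|u^\eps\|_\infty+\|\partial_x u^\eps\|_\infty$ and $\|u^\eps\|_{H^m}$. The nonlocal term is even easier: $\partial_x K*u^\eps$ is a Fourier multiplier of order zero, so it contributes only lower-order expressions. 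This yields $\eps$-uniform $H^m$ bounds on $[0,T_0]$ for every $m$. Passing $\eps\to 0^+$ using Aubin--Lions compactness produces a classical solution $u\in C([0,T_0],H^k_x)$ for every $k$; then the equation itself upgrades temporal regularity, giving $u\in C^\infty([0,T_0]\times\R)$.

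For uniqueness I would estimate $\|u-v\|_2^2$ for two such solutions with the same initial data: the difference equation has principal part $\partial_x((u^2-v^2)\partial_x u)+\partial_x(v^2\partial_x(u-v))$, and after integration by parts the genuinely degenerate piece $-r\int v^2 (\partial_x(u-v))^2$ has the right sign while every other term is bounded by $(\|u\|_{W^{1,\infty}}+\|v\|_{W^{1,\infty}})\|u-v\|_2^2$, so Gronwall closes. Finally, nonnegativity is obtained at the approximate level: expanded as $\partial_t u^\eps = r(u^\eps{}^2+\eps)\partial_x^2 u^\eps + b_\eps\partial_x u^\eps + c_\eps u^\eps$ with bounded $b_\eps,c_\eps$, the parabolic maximum principle forbids $u^\eps$ from becoming negative when $u_0\ge0$, and the property persists in the limit. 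The main technical obstacle is clearly the $W^{1,\infty}$ estimate in step two, since the nonlocal drift contributes a delta function upon differentiation and the equation for $w^\eps$ carries a cubic self-interaction; extracting a favorable-sign dissipation and controlling the remaining terms with the sole initial norm $\|u_0\|_2+\|\partial_x u_0\|_\infty$ is the heart of the argument.
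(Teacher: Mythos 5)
Your proposal follows essentially the same route as the paper: the paper regularizes \eqref{eq1} to exactly your viscous equation \eqref{eq_WSR_1}, proves $\eps$-uniform bounds on $\|u^\eps\|_p$ ($2\le p\le\infty$), $\|\partial_x u^\eps\|_\infty$ and then all higher Sobolev norms on a time interval depending only on $\|u_0\|_2+\|\partial_x u_0\|_\infty$, passes to the limit, and gets uniqueness from an $L^2$ Gronwall estimate on differences and positivity from the maximum principle at the regularized level. The only differences are cosmetic choices of execution (the paper obtains the $L^\infty$ and $W^{1,\infty}$ bounds by $L^p$ estimates with $p\to\infty$ — using a completing-the-square trick that closes the $L^2$ bound without invoking $\|u\|_\infty$ — and replaces Aubin--Lions by showing $(u^\eps)$ is Cauchy in $C([0,T_0];L^2_x)$, which simultaneously yields uniqueness), so your plan is sound as written.
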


\begin{rem}
The assumptions on the initial data $u_0$ in Theorem \ref{thm1} can
be weakened significantly (see for example Theorem \ref{thm_local}).
However in order to simplify the presentation, we do not state our
theorems here in its most general form.
\end{rem}

Our second theorem gives the blowup or continuation criteria of
solutions. Roughly speaking, it says that all the $L^p_x$-norm of
the solution cannot blow up and we can continue the solution as long
as we have a control of the gradient of the solution.

\begin{thm}[Blowup alternative and growth estimate of $L^p_x$-norm] \label{thm2}
Assume $u \in C^\infty([0,T)\times \R)$ is a maximal-lifespan
solution obtained in Theorem \ref{thm1}. Then either $T=+\infty$ or
$T<+\infty$ and
\begin{align*}
 \lim_{t\to T} \| \partial_x u(t) \|_{\infty} =+\infty.
\end{align*}
For any $2\le p<\infty$, there exists a generic constant $C$ such
that
\begin{align*}
 \| u(t) \|_{p} \le \|u_0 \|_{p} e^{Cpt}, \quad \forall\, t\in\ [0,T).
\end{align*}
If in addition $u_0\ge 0$, then $u(t) \ge 0$ for all $t\in [0,T)$,
and we also have the $p$-independent estimate for all $2\le p\le
+\infty$:
\begin{align*}
 \| u(t) \|_{p} \le \|u_0\|_{p} \exp \left ( Ct \|u_0\|_{2} e^{Ct} \right), \quad\forall\,
  t\in[ 0,T).
\end{align*}
In particular if $p=+\infty$, then
\begin{align*}
 \| u(t) \|_{\infty} \le \|u_0\|_{\infty} \exp\left( Ct \|u_0\|_{2} e^{Ct} \right),
 \quad\forall\, t\in [0,T).
\end{align*}
\end{thm}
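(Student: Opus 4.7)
The plan is to prove the three assertions sequentially, starting with the blowup alternative, then the signed-data $L^p$ bound, and finally the improved nonnegative-data bound.

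For the blowup alternative, argue by contradiction. If $T<\infty$ and $\sup_{t\in[0,T)}\|\partial_x u(t)\|_\infty<\infty$, the $L^2$ growth estimate (a special case of the second claim below) bounds $\sup_{t\in[0,T)}\|u(t)\|_2$. Hence $\|u(t_*)\|_2+\|\partial_xu(t_*)\|_\infty$ is uniformly controlled on $[0,T)$, and applying Theorem~\ref{thm1} with initial data $u(t_*)$ for $t_*$ sufficiently close to $T$ yields a smooth extension past $T$, contradicting maximality.

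For the $L^p$ growth estimate, multiply \eqref{eq1} by $p|u|^{p-2}u$ (a valid multiplier for $p\ge 2$) and integrate. Using $\partial_x(|u|^{p-2}u)=(p-1)|u|^{p-2}\partial_x u$ and integrating by parts,
\begin{equation*}
\frac{d}{dt}\|u(t)\|_p^p=-(p-1)\int_{\R}|u|^p\partial_x v_a\,dx-rp(p-1)\int_{\R}|u|^p(\partial_x u)^2\,dx,
\end{equation*}
where $v_a=\partial_x K\ast u$. The second term is dissipative. For the first, a further integration by parts rewrites it as $p(p-1)\int|u|^{p-1}\mathrm{sgn}(u)(\partial_x u)v_a\,dx$, which by Cauchy--Schwarz is bounded by $p(p-1)\bigl(\int|u|^p(\partial_x u)^2\bigr)^{1/2}\bigl(\int|u|^{p-2}v_a^2\bigr)^{1/2}$. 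Since $K$ is Lipschitz with a Lipschitz point at the origin, $\partial_x K\in L^1\cap L^2$, so Young's convolution inequality gives $\|v_a\|_q\lsm\|u\|_q$ for $q=2,\infty$. A further Young's inequality then absorbs the $\bigl(\int|u|^p(\partial_x u)^2\bigr)^{1/2}$ factor into the dissipation. For $p=2$, the bound $\int v_a^2=\|v_a\|_2^2\lsm\|u\|_2^2$ closes directly to $\tfrac{d}{dt}\|u\|_2^2\le C\|u\|_2^2$, yielding $\|u(t)\|_2\le\|u_0\|_2 e^{Ct}$. For $p\ge 3$ one bootstraps this estimate, using interpolation to control $\|u\|_{p-2}$ in terms of $\|u\|_2$ and $\|u\|_p$; the constant in the exponent scales like $Cp$, matching the claim.

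For the estimate under $u_0\ge 0$, first establish nonnegativity by regularizing the diffusion ($u^2\to u^2+\eps$), applying the parabolic maximum principle to the uniformly parabolic approximation, and passing $\eps\downarrow 0$. Once $u\ge 0$, the divergence form of \eqref{eq1} yields $\|u(t)\|_1=\|u_0\|_1$. Combined with the $L^2$ bound, this enables the refined estimate $\|\partial_x v_a\|_\infty\lsm\|u\|_2$: decomposing $\partial_x^2 K=L-c\delta$ with $L\in L^1\cap L^2$ (the delta arising from the jump of $\partial_x K$ at the Lipschitz point), the $L\ast u$ contribution is bounded by $\|u\|_2$, while the $c\int|u|^{p+1}$ contribution from the delta is absorbed into the dissipation via Gagliardo--Nirenberg combined with the conservation of $\|u\|_1$. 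This produces the $p$-uniform inequality $\tfrac{d}{dt}\|u\|_p^p\le C(p-1)\|u\|_2\|u\|_p^p$. Combined with $\|u(t)\|_2\le\|u_0\|_2 e^{Ct}$ and Gr\"onwall,
\begin{equation*}
\|u(t)\|_p\le\|u_0\|_p\exp\Bigl(C\int_0^t\|u_0\|_2 e^{Cs}\,ds\Bigr)\le\|u_0\|_p\exp\bigl(Ct\|u_0\|_2 e^{Ct}\bigr),
\end{equation*}
with $C$ independent of $p$. The $L^\infty$ bound follows by passing to the limit $p\to\infty$.

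The main obstacle is securing the $p$-uniform constant $C$ in the exponent for the nonnegative case: absorbing the $c\int|u|^{p+1}$ term from the delta part of $\partial_x^2 K$ into the dissipation without incurring $p$-dependent penalties relies crucially on the conservation of $\|u\|_1$ and on the one-dimensional Gagliardo--Nirenberg interpolation; the signed case, by contrast, is more direct but requires careful bookkeeping of the $p$-dependent constants through the Cauchy--Schwarz/Young step.
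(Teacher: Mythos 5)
Your treatment of the blowup alternative and of the signed-data $L^p$ bound follows the paper's route: the local existence time depends only on $\|u_0\|_2+\|\partial_x u_0\|_\infty$, so the $L^2$ growth bound plus a bound on $\|\partial_x u\|_\infty$ lets you restart and extend; and the $L^p$ estimate is obtained exactly as in the paper by playing the transport term against the quartic dissipation $-(p-1)\int |u|^p u_x^2\,dx$ (your Cauchy--Schwarz/Young step is the paper's completed square) and then using Young's convolution inequality on $\partial_x K*u$. One repair is needed in your bookkeeping of the leftover term: you estimate $\int |u|^{p-2}v_a^2\,dx$ by $\|v_a\|_\infty^2\|u\|_{p-2}^{p-2}$ and then try to interpolate $\|u\|_{p-2}$ between $\|u\|_2$ and $\|u\|_p$, which fails for $2<p<4$ (there $p-2<2$) and is in any case unnecessary. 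Pair the exponents as the paper does, $\int|u|^{p-2}v_a^2\,dx\le \|u\|_p^{p-2}\|v_a\|_p^2\le\|\partial_x K\|_1^2\|u\|_p^p$, and the inequality closes for every $p\ge 2$ in one line with constant $\frac{p-1}{4}\|\partial_x K\|_1^2$, giving $e^{Cpt}$ with no bootstrap.

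The nonnegative case is where your proposal genuinely departs from the paper, and as written it has a gap. First, your mechanism relies on conservation of $\|u\|_1$, but Theorem \ref{thm2} does not assume $u_0\in L^1_x$: data in $\bigcap_m H^m_x(\R)$ need not be integrable, and the $L^1$ conservation of Theorem \ref{thm3} is only available under the extra hypothesis $u_0\in L^1_x$. Second, even granting integrable data, the step you yourself flag as the main obstacle --- absorbing the delta-function contribution $c\int u^{p+1}\,dx$ of $\partial_{xx}K$ into the dissipation via Gagliardo--Nirenberg with constants uniform in $p$ --- is the entire difficulty and is only asserted. Writing $w=u^{(p+2)/2}$, the dissipation is $\tfrac{4(p-1)}{(p+2)^2}\|\partial_x w\|_2^2$ while $\int u^{p+1}=\|w\|_{2(p+1)/(p+2)}^{2(p+1)/(p+2)}$ and $\|u\|_1$ is not a norm of $w$; making this absorption quantitative is a Moser/Alikakos-type iteration, and it is not clear it produces the specific $p$-independent bound $\|u_0\|_p\exp(Ct\|u_0\|_2e^{Ct})$ claimed in the theorem. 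The paper avoids all of this: it discards the dissipation entirely, integrates by parts to put both derivatives on $K$, writes $\partial_{xx}K=e^{-|x|}-2\delta$, observes that for $u\ge 0$ the delta contribution to $-\tfrac{p-1}{p}\int(\partial_{xx}K*u)|u|^p\,dx$ has a definite sign and can be dropped, and bounds the remaining smooth part by $\|e^{-|x|}\|_2\|u(t)\|_2\,\|u\|_p^p$, which is manifestly uniform in $p$ and closes by Gr\"onwall using the already-established $L^2$ growth. You should either reproduce that sign argument (checking carefully the orientation of the jump of $\partial_x K$ at the origin, on which the sign of the delta term hinges) or supply the full iteration; the one-sentence sketch does not yet prove the $p$-uniform estimate.
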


The next theorem states that if we assume the initial data has a
little bit more integrability, then the local solution will inherit
this property. Note in particular that the $L^1_x$-norm of the solution is preserved for
all time if the initial data is nonnegative and in $L^1_x(\R)$.

\begin{thm}[Local solution in $L_x^p(\R)$] \label{thm3}
Assume $u_0 \in  \bigcap_{m=0}^\infty H^m_x(\mathbb R)$ and $u_0 \in
L_x^p(\R)$ for some $1\le p<2$. Then the local solution obtained in
Theorem \ref{thm1} also satisfies $u\in C([0,T], L_x^p)$. If in
addition $u_0 \ge 0$,  then $u(t)\ge 0$ for any $t\in [0,T]$. If
also $p=1$, then  $\|u(t)\|_{1} = \|u_0\|_{1}$, i.e. $L_x^1$-norm of
the solution is preserved.
\end{thm}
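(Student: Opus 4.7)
The plan is to work in Lagrangian coordinates, exploiting the fact that \eqref{eq1} can be cast as the continuity equation $\partial_t u + \partial_x(uv) = 0$ with velocity field $v := v_a + v_d = \partial_x K * u - r u \partial_x u$. Since the classical solution $u$ furnished by Theorem~\ref{thm1} belongs to $C([0,T], H^m_x)$ for every $m$, the Sobolev embedding $H^1_x \hookrightarrow L^\infty_x$, the distributional identity $\partial_{xx}(e^{-|x|}) = e^{-|x|} - 2\delta$, and Young's convolution inequality jointly yield uniform $L^\infty$ bounds on $v$ and on $\partial_x v$ over $[0,T]\times\R$.

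Next I would introduce the flow $X(t,x_0)$ defined by the characteristic ODE $\dot X(t,x_0) = v(t,X(t,x_0))$ with $X(0,x_0) = x_0$, together with its Jacobian $J(t,x_0) := \partial_{x_0} X(t,x_0)$. Standard ODE theory then furnishes a smooth one-parameter family of $C^\infty$ diffeomorphisms of $\R$ whose Jacobian satisfies the two-sided bound $e^{-\tau(t)} \le J(t,x_0) \le e^{\tau(t)}$, where $\tau(t) := \int_0^t \|\partial_x v(s,\cdot)\|_\infty\, ds < \infty$. Combining $\dot J = (\partial_x v\circ X)\,J$ with the continuity equation read along characteristics gives $\frac{d}{dt}\bigl(u(t,X(t,x_0))\,J(t,x_0)\bigr) \equiv 0$, hence the Lagrangian identity
\begin{equation*}
u(t, X(t,x_0))\, J(t,x_0) = u_0(x_0).
\end{equation*}

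A change of variables $x = X(t,x_0)$ then produces the key formula
\begin{equation*}
\int_\R |u(t,x)|^p\, dx = \int_\R |u_0(x_0)|^p\, J(t,x_0)^{1-p}\, dx_0,
\end{equation*}
from which $\|u(t)\|_p \le e^{(p-1)\tau(t)/p}\|u_0\|_p$ follows at once for every $p \in [1,\infty]$, giving $u(t) \in L^p_x$ uniformly on $[0,T]$ as required. When $p = 1$ the Jacobian factor disappears, yielding the conservation law $\|u(t)\|_1 = \|u_0\|_1$ directly; combined with the nonnegativity propagation already established in Theorem~\ref{thm1}, this settles the $p = 1$, $u_0 \ge 0$ portion of the statement.

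To upgrade the a priori $L^p$-bound to strong continuity $u \in C([0,T], L^p_x)$, I would combine it with the continuity $u \in C([0,T], L^2_x)$ from Theorem~\ref{thm1}: along any sequence $t_k \to s$ in $[0,T]$ a subsequence converges a.e.\ by the usual extraction from $L^2$-convergence, while dominated convergence applied to $|u_0|^p J^{1-p}(t_k,\cdot)$ --- dominated by the integrable majorant $e^{(p-1)\tau(T)}|u_0|^p$ --- gives $\|u(t_k)\|_p \to \|u(s)\|_p$; the Brezis--Lieb lemma then converts a.e.\ plus norm convergence into $L^p_x$ convergence along the subsequence, and a standard argument promotes this to convergence along the full sequence. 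The chief technical obstacle is making the change of variables fully rigorous on all of $\R$: one must verify that $X(t,\cdot):\R\to\R$ is an honest bijection with a uniformly positive and uniformly bounded Jacobian throughout $[0,T]$. This is exactly what the $L^\infty$ bounds on $v$ and $\partial_x v$ from the first paragraph provide, so no approximation of the initial data is required.
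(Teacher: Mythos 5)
Your argument is correct, and it takes a genuinely different route from the paper. The paper proves the $L^p_x$ bound by a direct Duhamel-type estimate: writing $u(t)=u_0+\int_0^t \partial_t u\,ds$ and bounding $\|\tfrac13\partial_{xx}(u^3)-\partial_x((\partial_x K*u)u)\|_p$ by H\"older and Sobolev embedding in terms of $\|u\|_{H^3}^3+\|u\|_{H^2}^2$, which is finite on $[0,T]$; the $L^1_x$ conservation for $u_0\ge 0$ is then obtained by integrating the equation in $x$. You instead pass to Lagrangian coordinates along the full velocity $v=\partial_x K*u-ru\partial_x u$, which is legitimate here because the solution of Theorem \ref{thm1} is smooth with all $H^m_x$ norms bounded on the closed interval $[0,T]$, so $v$ and $\partial_x v$ are indeed uniformly bounded and the flow is a global diffeomorphism with two-sided Jacobian bounds; the identity $u(t,X)J=u_0$ and the change of variables then give the quantitative estimate $\|u(t)\|_p\le e^{(p-1)\tau(t)/p}\|u_0\|_p$ and, for $p=1$, exact conservation of $\|u(t)\|_1$ \emph{without} any sign assumption (a slight strengthening of the stated theorem, since $J>0$ preserves the sign of $u$ along characteristics). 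Your upgrade to continuity in $L^p_x$ via a.e.\ convergence plus convergence of norms (Brezis--Lieb, or the Riesz lemma) is also sound, using the continuity of $J$ in $t$ and the integrable majorant $e^{(p-1)\tau(T)}|u_0|^p$. Interestingly, the characteristics machinery you invoke is essentially what the paper deploys later in the proof of Theorem \ref{thm4} (there only with the attractive part $\partial_x K*u$ of the velocity), so your proof trades the paper's short Sobolev-embedding computation for a more structural argument that yields sharper, $H^m$-independent bounds.
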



The last theorem states that any solution with smooth nonnegative initial data will blow up in finite time.

\begin{thm}[Finite time blowup]\label{thm4}
Let $u_0 \in C_c^\infty(\R)$ and $u_0 \ge 0$. Assume $u_0$ is not identically $0$.
Then there exists time $T<\infty$ such that the corresponding solution with $u_0$ as initial data blows up at time $T$
in the sense that
\begin{align} \label{eq_thm41}
 \lim_{t\to T} \| \partial_x u(t) \|_{\infty} =+\infty.
\end{align}
However all the $L^p_x$-norm of $u$ remain finite at the time of blowup, i.e.:
\begin{align} \label{eq_thm42}
 \sup_{0\le t < T} \| u(t) \|_{p} \le C(T)<\infty, \forall\; 1\le p \le \infty.
\end{align}
\end{thm}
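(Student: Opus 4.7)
The plan is an argument by contradiction that pits a shrinking-support estimate against the a priori $L^\infty$-bound of Theorem \ref{thm2}. By Theorems \ref{thm1}--\ref{thm3}, the initial datum $u_0\in C_c^\infty(\R)$, $u_0\ge 0$, $u_0\not\equiv 0$ produces a maximal nonnegative classical solution $u\in C^\infty([0,T)\times\R)$ with conserved mass $\|u(t)\|_1=\|u_0\|_1=:M>0$. The blowup alternative in Theorem \ref{thm2} reduces \eqref{eq_thm41} to showing $T<\infty$. I would let $[a(t),b(t)]$ denote the convex hull of $\mathrm{supp}\,u(t)$ and set $L(t):=b(t)-a(t)$, $L_0:=L(0)$; the goal is to exhibit an explicit $T^*<\infty$ beyond which the smoothness cannot persist.

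The next step is to follow the endpoints along characteristics. Writing \eqref{eq1} in continuity form $\partial_t u+\partial_x(uv)=0$ with $v=v_a+v_d$, $v_a=\partial_x K*u$, $v_d=-ru\partial_x u$, the velocity $v$ is smooth on $[0,T)\times\R$, so the Lagrangian flow $\dot X=v(t,X)$, $X(0,\alpha)=\alpha$ is a family of smooth increasing diffeomorphisms. The mass-transport identity $u(t,X(t,\alpha))\,\partial_\alpha X(t,\alpha)=u_0(\alpha)$ shows that $X(t,\cdot)$ sends $\mathrm{supp}\,u_0$ bijectively onto $\mathrm{supp}\,u(t)$; hence $b(t)=X(t,b_0)$ and $a(t)=X(t,a_0)$, with $\dot b(t)=v(t,b(t))$ and $\dot a(t)=v(t,a(t))$. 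Because $u(t,\cdot)\in C^\infty$ vanishes identically outside $[a(t),b(t)]$, all its derivatives vanish at the endpoints; in particular $v_d\equiv 0$ there, so the endpoint motion is driven purely by the nonlocal attractive velocity $v_a$.

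The core estimate is a uniform lower bound on $|v_a|$ at the endpoints. With $K(x)=e^{-|x|}$ and $\partial_x K(x)=-\mathrm{sgn}(x)e^{-|x|}$, a direct computation gives
\[
v_a(t,b(t))=-\int_{a(t)}^{b(t)}e^{-(b(t)-y)}u(t,y)\,dy\le -e^{-L(t)}M,
\]
and symmetrically $v_a(t,a(t))\ge e^{-L(t)}M$. These two inequalities already imply $\dot b\le 0\le \dot a$, so $L(t)\le L_0$; feeding this back yields
\[
\dot L(t)\le -2Me^{-L_0},\qquad L(t)\le L_0-2Me^{-L_0}t,
\]
and $L$ would reach zero at $T^*:=L_0 e^{L_0}/(2M)$. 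But mass conservation gives $\|u(t)\|_\infty\ge M/L(t)$, which contradicts the bound $\|u(t)\|_\infty\le \|u_0\|_\infty\exp(Ct\|u_0\|_2 e^{Ct})$ of Theorem \ref{thm2} as $t\to T^*$; hence $T<T^*$, and \eqref{eq_thm41} follows from the blowup alternative. The bound \eqref{eq_thm42} is then immediate: the $L^1$-norm is preserved by Theorem \ref{thm3}, the $L^\infty$-norm is controlled on $[0,T)$ by Theorem \ref{thm2}, and the intermediate $L^p$-norms follow by interpolation.

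The step I expect to require the most care is the identification of the endpoints with Lagrangian characteristics together with the vanishing of $v_d$ at the free boundary; both facts hinge on $u$ being $C^\infty$ up to the edge of its support, which is exactly the regularity supplied by Theorem \ref{thm1} while $t<T$. Everything else is elementary: the explicit form of $K$ produces the exponential weight in the pointwise lower bound for $|v_a|$ at the endpoints, and the sign information $\dot b\le 0\le \dot a$ upgrades this pointwise bound into a uniform one, yielding the explicit blowup time $T^*$.
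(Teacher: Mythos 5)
Your proof is correct and follows essentially the same strategy as the paper's: track the endpoints of the support along characteristics, observe that the attractive velocity $\partial_x K*u$ pushes them toward each other at a speed bounded below by a fixed multiple of the conserved mass, and contradict global existence. The only technical variations are that the paper flows along the attractive velocity alone, so that zeros of $u_0$ propagate along characteristics (Lemma \ref{lem_Xta}), instead of using the full velocity together with the mass-transport identity and the vanishing of $v_d$ at the free boundary, and that it closes the contradiction by letting $X(t,-L)$ grow linearly out of $[-L,L]$ rather than by driving $L(t)\to 0$ against the a priori $L^\infty$ bound; both variants are sound.
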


\begin{rem}
In the case of \eqref{eq1} \textit{without diffusion} (i.e. $r=0$ in
\eqref{eq1}), if we take the kernel $K(x)=e^{-|x|}$, then the result
of Bodnar and Velazquez \cite{BoVe06} says that any solution with
initial data $u_0$ satisfying a slope condition will blow up in
finite time in the sense that $\|u(t)\|_{\infty}$ blows up. This is
highly in contrast with our result here when the
diffusion term does not vanish. In this case 
the solution will blow up at the level of the gradient, i.e. we have  $\|\nabla u(t)\|_{\infty}$ tends to infinity while all the
other $L^p_x$-norm  remain finite as $t\to T$, where $T$ is the
blowup time.
\end{rem}

\textbf{Notations.} Throughout the paper we denote $L_x^p=L_x^p(\R)$
for $1\le p \le \infty$ as the usual Lebesgue space on $\R$. We also
write $\|\cdot\|_p = \|\cdot\|_{L_x^p}$.
 For $s>0$, $s$ being an integer and
$1\le p\le \infty$, $W^{s,p}_x = W^{s,p}_x(\R)$ denotes the usual Sobolev space
\begin{align*}
 W^{s,p}_x = \bigl\{ f \in S^\prime(\R): \, \| f\|_{W^{s,p}}
  = \sum_{0\le j\le s} \| \partial_x^j f \|_{p} <\infty \bigr\}.
\end{align*}
When $p=2$, we denote $H^m_x=H^m_x(\R) = W^{2,p}_x(\R)$ and
$\|\cdot\|_{H^m}$ as its norm. Occasionally we shall use the
Sobolev space of fractional power $H_x^s(\R)$ whose norm can be
defined via Fourier transform:
$$
\|f\|_{ H^s}=\| (1+|\xi|)^s \hat f(\xi)\|_{L_{\xi}^2}.
$$
For any two quantities $X$ and $Y$, we use $X \lesssim Y$ or $Y
\gtrsim X$ whenever $X \leq CY$ for some constant $C>0$. A constant
$C$ with subscripts implies the dependence on these parameters.
We write $A=A(B_1,\cdots,B_k)$ when we want to stress that a quantity
$A$ depends on the quantities $B_1,\cdots,B_k$. 

 From now on we assume $r=1$ in \eqref{eq1} without loss of generality.
Same results hold for any $r>0$.

\section{The regularized equation and its wellposedness}
Since \eqref{eq1} is a degenerate parabolic equation, in order to
construct a local solution, we have to regularize the equation. To
this end, we consider the following regularized version of
\eqref{eq1}
\begin{align} \label{eq_WSR_1}
 \begin{cases}
  \partial_t u = (u^2+\epsilon) \partial_{xx} u + 2 u (\partial_x u)^2 - \partial_x(u\, K*\partial_x u ), \\
  u(0,x)=u_0(x).
 \end{cases}
\end{align}
Here $\epsilon>0$ is a parameter. We are going to prove the following

\begin{prop}[Local solution of the regularized equation]
\label{prop1}

 Assume $u_0 \in \bigcap_{m=0}^\infty H^m_x(\mathbb R)$ ($u_0$ is not necessarily nonnegative).
Then there exists a positive $T_0 =T_0(\|\partial_x u_0 \|_{\infty}
+\|u_0\|_{2})$ ($T_0$ is independent of $\epsilon$) such that
\eqref{eq_WSR_1} has a unique solution $u^\epsilon \in
C^\infty([0,T_0]\times \R)$ for any $\epsilon>0$. In particular
$u^\epsilon \in C([0,T_0];\, H^k_x)$ for any $k\ge 0$.
\end{prop}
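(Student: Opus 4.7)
The plan is to treat \eqref{eq_WSR_1} as a uniformly parabolic quasilinear problem for each fixed $\epsilon>0$, build a local solution by standard theory, then close $\epsilon$-independent a priori estimates in the norm $\|u\|_2+\|\partial_x u\|_\infty$ so as to obtain a common lifespan $T_0$. For the first step, since $u^2+\epsilon\ge\epsilon>0$, the equation is strictly parabolic, and a local smooth solution $u^\epsilon\in C^\infty([0,T_\epsilon]\times\R)$ can be produced either by Picard iteration on a linearized problem or by a Galerkin scheme, with $T_\epsilon$ possibly depending on $\epsilon$; parabolic regularity promotes the Sobolev-class solution to $C^\infty$.

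Next I would derive differential inequalities for $\|u\|_2$ and $\|\partial_x u\|_\infty$ with $\epsilon$-independent constants. Multiplying the equation by $u$ and integrating by parts yields
\begin{align*}
\tfrac12\tfrac{d}{dt}\|u\|_2^2+\int (u^2+\epsilon)(\partial_x u)^2\,dx=\int u(\partial_x u)(K*\partial_x u)\,dx,
\end{align*}
and after a further integration by parts the right-hand side can be rewritten as $-\tfrac12\int u^2(K*u-2u)\,dx$ using the distributional identity $K''=K-2\delta$ valid for $K(x)=e^{-|x|}$. This removes every derivative of $u$ from the right-hand side, giving a bound of the form $C(\|u\|_2^3+\|u\|_\infty\|u\|_2^2)$, and $\|u\|_\infty$ is controlled by $\|u\|_2^{1/2}\|\partial_x u\|_\infty^{1/2}$ via Gagliardo--Nirenberg. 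An $L^\infty$ bound on $w=\partial_x u$ is obtained by differentiating the equation in $x$ and applying a pointwise maximum-principle argument: at a maximum of $w$ one has $w_x=0$ and $w_{xx}\le 0$, so the term $(u^2+\epsilon)w_{xx}$ contributes with the good sign regardless of $\epsilon$, while the residual terms $w^3$, $w(K*w_x)$, $u(K*w_{xx})$ are all controlled by polynomials in $\|u\|_2$ and $\|w\|_\infty$ using $\|K\|_1,\|K'\|_1,\|K\|_2\le C$ together with the identity above. Combining these bounds produces
\begin{align*}
\tfrac{d}{dt}\bigl(\|u\|_2+\|\partial_x u\|_\infty\bigr)\le \Phi\bigl(\|u\|_2+\|\partial_x u\|_\infty\bigr)
\end{align*}
for some locally Lipschitz $\Phi$, whence by ODE comparison the solution extends to a common interval $[0,T_0]$ with $T_0=T_0(\|u_0\|_2+\|\partial_x u_0\|_\infty)$ independent of $\epsilon$.

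Finally, to propagate higher regularity, I would pair $\partial_x^k$ of \eqref{eq_WSR_1} with $\partial_x^k u$ in $L^2$ and invoke Moser-type commutator estimates together with $\|K*f\|_{H^k}\lesssim\|f\|_{H^k}$ to obtain $\tfrac{d}{dt}\|u^\epsilon\|_{H^k}^2\le C_k(1+\|u\|_2+\|\partial_x u\|_\infty)^{\alpha_k}\|u^\epsilon\|_{H^k}^2$ uniformly in $\epsilon$, which by Gronwall keeps every $H^k$-norm finite on $[0,T_0]$; time regularity then follows by reading it off from the equation. Uniqueness is handled routinely by differencing two solutions, performing an $L^2$ energy estimate, and applying Gronwall with the bounds already established. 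The main obstacle is closing the $\|\partial_x u\|_\infty$ estimate without losing a factor of $\epsilon^{-1}$, and this hinges precisely on the identity $K''=K-2\delta$, which lets one eliminate every second-derivative nonlocal term in favor of expressions controlled by $\|u\|_2$ and $\|\partial_x u\|_\infty$ alone.
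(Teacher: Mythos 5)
Your proposal follows essentially the same architecture as the paper's proof: fixed-$\epsilon$ local existence by iteration on the linearized problem (the paper uses a contraction built on an Otani--Sugiyama lemma), then $\epsilon$-uniform a priori bounds on $\|u\|_2$ and $\|\partial_x u\|_\infty$ exploiting $\partial_{xx}K=K-2\delta$, then higher-order energy estimates and an $L^2$ Gronwall argument for uniqueness. The only genuine variations are cosmetic: for the $L^2$ bound the paper completes a square, $-(p-1)\int|u|^{p-2}(uu_x-\tfrac12\partial_xK*u)^2$, which yields $\|u(t)\|_p\le\|u_0\|_p e^{Cpt}$ with no reference to $\|u\|_\infty$, whereas your double integration by parts couples the $L^2$ and gradient estimates into one ODE system (both close); and for the gradient bound you use a pointwise maximum-principle argument where the paper runs $L^p$ estimates on $v=u_x$ and lets $p\to\infty$ (the paper itself notes the maximum principle as an alternative). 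Two small inaccuracies to fix: the interpolation should read $\|u\|_\infty\lesssim\|u\|_2^{2/3}\|\partial_x u\|_\infty^{1/3}$ (your exponents $1/2,1/2$ do not scale correctly, though any such bound suffices); and the $H^k$ inequality cannot hold with a constant depending only on $\|u\|_2+\|\partial_x u\|_\infty$, since the top-order commutator produces $\int(u_x^2+uu_{xx})(\partial_x^k u)^2\,dx$, which requires $\|u_{xx}\|_\infty$ --- one must run the estimates as a cascade (first $\|u_{xx}\|_p$ and $\|u_{xxx}\|_2$, hence $\|u_{xx}\|_\infty$, then general $k$), exactly as the paper does in Steps 3--4; this does not affect the claimed dependence of $T_0$.
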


\begin{cor} \label{cor1}
Assume $u_0 \in \bigcap_{m=0}^\infty H^m_x(\mathbb R)$ and $u_0 \in
L_x^p(\R)$ for some $1\le p<2$. Then the local solution obtained in
Proposition \ref{prop1} also satisfies $u^{\epsilon} \in
C([0,T_0],L_x^p)$. And $T_0$ can be chosen to be
$T_0=T_0(\|u_0\|_{p}+\|\partial_x u_0\|_{\infty})$.
\end{cor}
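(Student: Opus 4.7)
The plan has three parts: first recast $T_0$, then establish a uniform $L^p$ bound on $u^\epsilon$, then deduce continuity in $L^p$.

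For the first part, since $u_0 \in \bigcap_m H^m_x$ has $\partial_x u_0 \in L^\infty$, the elementary observation that $|u_0(x_0)|=M$ forces $|u_0| \geq M/2$ on an interval of length $M/(2\|\partial_x u_0\|_\infty)$ around $x_0$ yields
\begin{equation*}
 \|u_0\|_\infty \leq C_p\bigl(\|u_0\|_p^p\,\|\partial_x u_0\|_\infty\bigr)^{1/(p+1)}.
\end{equation*}
Combining with $\|u_0\|_2^2 \leq \|u_0\|_\infty^{2-p}\|u_0\|_p^p$ bounds $\|u_0\|_2$ in terms of $\|u_0\|_p + \|\partial_x u_0\|_\infty$, so the $T_0$ from Proposition \ref{prop1} can be recast in the announced form.

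For the $L^p$ bound, rewrite \eqref{eq_WSR_1} in conservative form $\partial_t u^\epsilon = \partial_x[((u^\epsilon)^2+\epsilon)\partial_x u^\epsilon] - \partial_x(u^\epsilon K*\partial_x u^\epsilon)$ and multiply by $\phi_\delta'(u^\epsilon)$, where $\phi_\delta(u) = \frac{1}{p}[(u^2+\delta)^{p/2} - \delta^{p/2}]$ smoothly approximates $|u|^p/p$. Two integrations by parts produce the identity
\begin{equation*}
 \frac{d}{dt}\int \phi_\delta(u^\epsilon) + \int \phi_\delta''(u^\epsilon)((u^\epsilon)^2+\epsilon)(\partial_x u^\epsilon)^2 = -\int \Psi_\delta(u^\epsilon)(\partial_x K * \partial_x u^\epsilon),
\end{equation*}
with $\Psi_\delta(u) = \int_0^u v\phi_\delta''(v)\,dv = u\phi_\delta'(u) - \phi_\delta(u) + \phi_\delta(0)$. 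The crucial uniform-in-$\delta$ bound $0 \leq \Psi_\delta(u) \leq u\phi_\delta'(u) = u^2(u^2+\delta)^{p/2-1} \leq |u|^p$ (valid for $1 \leq p \leq 2$), combined with $\|\partial_x K * \partial_x u^\epsilon\|_\infty \leq \|\partial_x K\|_1 \|\partial_x u^\epsilon\|_\infty$ (e.g., $\|\partial_x K\|_1 = 2$ for $K=e^{-|x|}$), bounds the right-hand side by $C_K \|\partial_x u^\epsilon\|_\infty \|u^\epsilon\|_p^p$. Dropping the nonnegative dissipation, sending $\delta \downarrow 0$ by monotone convergence (via $\phi_\delta \uparrow |u|^p/p$, which follows from concavity of $x \mapsto x^{p/2}$ for $p \leq 2$), and applying Gronwall with the bound on $\|\partial_x u^\epsilon\|_\infty$ from Proposition \ref{prop1} yields $\sup_{0 \leq t \leq T_0}\|u^\epsilon(t)\|_p < \infty$.

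For continuity in $L^p$, pointwise convergence $u^\epsilon(t,x) \to u^\epsilon(s,x)$ as $t \to s$ is immediate from smoothness and, together with the uniform $L^p$ bound, yields weak convergence $u^\epsilon(t) \rightharpoonup u^\epsilon(s)$ in $L^p$ for $1 < p < \infty$. The integrated energy inequality of the previous step, applied on $[s,t]$ for $t > s$, gives $\|u^\epsilon(t)\|_p^p \leq \|u^\epsilon(s)\|_p^p + C|t-s|$; combining with Fatou's lemma and the Radon--Riesz property produces continuity of $t \mapsto \|u^\epsilon(t)\|_p$. The Brezis--Lieb lemma then upgrades pointwise convergence plus norm convergence into strong $L^p$ convergence, establishing $u^\epsilon \in C([0,T_0], L^p)$; the endpoint $p=1$ is handled by a parallel argument using Vitali's theorem with uniform integrability. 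The main technical obstacle is the careful $\delta \to 0$ passage in the energy estimate, where the pointwise bound $|\Psi_\delta(u)| \leq |u|^p$ is essential, together with the somewhat delicate two-sided continuity of $t \mapsto \|u^\epsilon(t)\|_p$ in time.
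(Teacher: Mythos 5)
Your route is genuinely different from the paper's. The paper writes the regularized equation in Duhamel form,
$u^\epsilon(t)=e^{\epsilon t\partial_{xx}}u_0+\int_0^t e^{\epsilon(t-s)\partial_{xx}}\bigl(\tfrac13\partial_{xx}((u^\epsilon)^3)-\partial_x(u^\epsilon\,\partial_xK*u^\epsilon)\bigr)(s)\,ds$,
uses $\|e^{\epsilon t\partial_{xx}}f\|_p\lesssim\|f\|_p$, and bounds the source by $\|u^\epsilon\|_{W^{2,3p}}^3+\|u^\epsilon\|_{W^{1,2p}}^2\lesssim\|u^\epsilon\|_{H^3}^3+\|u^\epsilon\|_{H^2}^2$ via Sobolev embedding (the point being that $3p,2p\ge 2$, so no $L^q$ norm with $q<2$ of any derivative is ever needed); continuity is then asserted to follow "from similar estimates." Your energy method with the regularized weight $\phi_\delta$ avoids the semigroup entirely, handles the $\epsilon\partial_{xx}$ term by sign, and is closer in spirit to the paper's Step~1 computations; the integration by parts, the identity for $\Psi_\delta$, the bound $0\le\Psi_\delta(u)\le|u|^p$, and the monotone limit $\phi_\delta\uparrow|u|^p/p$ are all correct, and your derivation of $T_0=T_0(\|u_0\|_p+\|\partial_xu_0\|_\infty)$ makes explicit the interpolation the paper states in one line.

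Two steps need repair. First, a circularity in the Gronwall step: the inequality $\frac{d}{dt}\int\phi_\delta(u)\le C\|\partial_xu\|_\infty\int\Psi_\delta(u)$ combined with $\Psi_\delta(u)\le|u|^p$ closes only if you already know $\|u(t)\|_p^p<\infty$ on the time interval, which is exactly what is being proved. The fix is to check the $\delta$-uniform pointwise bound $\Psi_\delta(u)\le C_p\,\phi_\delta(u)$ (setting $s=u^2/\delta$, the ratio $\Psi_\delta/\phi_\delta$ tends to $1$ as $s\to0$ and to $p-1$ as $s\to\infty$, hence is bounded on $(0,\infty)$), run Gronwall on the finite quantity $\int\phi_\delta(u)$ for each fixed $\delta>0$, and only then let $\delta\downarrow0$. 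Second, and more substantively, your continuity mechanism delivers only right-continuity of $t\mapsto\|u^\epsilon(t)\|_p$: the one-sided inequality $\|u(t)\|_p^p\le\|u(s)\|_p^p+C(t-s)$ for $t>s$ makes $g(t)=\|u(t)\|_p^p-Ct$ non-increasing, and Fatou makes it lower semicontinuous; such a function is right-continuous but can still satisfy $\lim_{t\uparrow s}g(t)>g(s)$, which here corresponds to $L^p$-mass escaping to spatial infinity as $t\uparrow s$ and is not excluded by anything you have written. Radon--Riesz/Brezis--Lieb cannot rescue this, since they take norm convergence as input. Closing the gap requires either equi-integrability of the tails $\int_{|x|>R}|u^\epsilon(t)|^p\,dx$ uniformly in $t$, or a representation such as the paper's Duhamel formula from which $\|u^\epsilon(s)-u^\epsilon(t)\|_p$ can be estimated directly for $t<s$. (To be fair, the paper itself omits the continuity details, but its Duhamel setup is the natural tool for them.)
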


In the case $u_0 \in \bigcap_{m=0}^\infty H^m_x(\mathbb R) $ with $u_0 \ge 0$. The regularized equation admits
a global solution.
\begin{prop} \label{prop2}
 Assume $u_0 \in \bigcap_{m=0}^\infty H^m_x(\mathbb R)$ with $u_0 \ge 0$.
Then for any $T>0$  \eqref{eq_WSR_1} has a unique solution
$u^\epsilon \in  C^\infty([0,T]\times \R) $. In particular
$u^\epsilon \in C([0,T]; \, H^k_x)$ for any $k\ge 0$. Also
$u^\epsilon(t)\ge 0$ for any $t\ge 0$. If in addition $u_0 \in
L_x^p(\R)$ for some $1\le p<2$, then $u^\epsilon \in C([0,T];\,
L_x^p)$.
\end{prop}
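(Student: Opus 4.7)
The strategy is to combine the short-time existence from Proposition \ref{prop1} with \emph{a priori} bounds on $\|u^\epsilon(t)\|_2+\|\partial_x u^\epsilon(t)\|_\infty$ that hold on any finite interval $[0,T]$: since the local existence time in Proposition \ref{prop1} depends only on this quantity, such bounds allow the solution to be extended by iteration up to any $T$. Throughout the argument the parameter $\epsilon>0$ is fixed, so that the leading diffusion coefficient $u^2+\epsilon$ is bounded below by $\epsilon$ and the equation is uniformly parabolic.

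\textbf{Positivity.} Expanding \eqref{eq_WSR_1} gives
\begin{equation*}
\partial_t u=(u^2+\epsilon)\partial_{xx}u+\bigl(2u\,\partial_x u-K*\partial_x u\bigr)\partial_x u-u\cdot(K*\partial_{xx}u).
\end{equation*}
Since $u^\epsilon$ lies in every $H^k_x$, all coefficients of this equation, regarded as a linear equation in $u$, are smooth and bounded, and the leading one is $\geq\epsilon$. The weak maximum principle (or, more concretely, perturbing $u^\epsilon$ by $\delta e^{\lambda t}$ with $\lambda$ chosen large enough to dominate the zeroth-order term $-u\,K*\partial_{xx}u$ at a spatial minimum and then letting $\delta\downarrow 0$) propagates $u_0\geq 0$ and yields $u^\epsilon(t)\geq 0$ on the interval of existence.

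\textbf{A priori bounds.} Multiplying \eqref{eq_WSR_1} by $u^\epsilon$ and integrating in $x$, the diffusion contributes the coercive term $-2\int(u^2+\epsilon)(\partial_x u)^2$, while two integrations by parts convert the nonlocal term into $-\int u^2\,K*\partial_{xx}u$, which is bounded by $C\|u\|_\infty\|u\|_2^2$ using the Lipschitz structure of $K$. For the $L^\infty$-bound one can reuse the $L^p$-argument of Theorem \ref{thm2}: its proof relies only on nonnegativity and integration by parts against $u^{p-1}$, and the extra term $\epsilon\partial_{xx}u$ is coercive and only helps, so the bound $\|u^\epsilon(t)\|_\infty\leq \|u_0\|_\infty\exp(Ct\|u_0\|_2 e^{Ct})$ transfers verbatim. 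The gradient bound is obtained by differentiating \eqref{eq_WSR_1} once in $x$; setting $v=\partial_x u^\epsilon$, one obtains a uniformly parabolic equation for $v$ whose coefficients are controlled by the already-established $\|u^\epsilon\|_\infty$. Evaluating at a spatial extremum of $\pm v$, the leading contribution $(u^2+\epsilon)\partial_{xx}v$ has the correct sign, and the remaining contributions (including the nonlocal ones, which are smoothed by $K*$) are controlled by $\|u^\epsilon\|_\infty$ and $\|v\|_\infty$, yielding a differential inequality that prevents blowup of $\|v(t)\|_\infty$ in finite time. Iterating Proposition \ref{prop1} then produces the global smooth solution, and the $L^p$-persistence for $1\leq p<2$ follows by applying Corollary \ref{cor1} on each subinterval, with the $L^1$-endpoint, when relevant, coming from the conservation $\|u^\epsilon(t)\|_1=\|u_0\|_1$, itself a consequence of nonnegativity and the divergence form of \eqref{eq_WSR_1}.

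The main obstacle is the $\|\partial_x u^\epsilon\|_\infty$ estimate. Although $\epsilon>0$ renders \eqref{eq_WSR_1} uniformly parabolic, the quasilinear term $2u(\partial_x u)^2$ produces a quadratic-in-$v$ source in the evolution of $v=\partial_x u^\epsilon$, so one cannot simply read off a linear ODE inequality at a spatial extremum; the argument must exploit the cancellation between this source and the strict negativity of $(u^2+\epsilon)\partial_{xx}v$ at a maximum of $v$ (with reversed signs at a minimum), and relies essentially on the previously derived $L^\infty$-bound on $u^\epsilon$ together with the uniform ellipticity provided by $\epsilon>0$.
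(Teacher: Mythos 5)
Your overall architecture matches the paper's: positivity by the weak maximum principle, the $p$-independent $L^p$ bound obtained by dropping the good term $-\tfrac{2(p-1)}{p}\int u|u|^p\,dx$ thanks to nonnegativity, then a gradient bound that exploits the uniform parabolicity supplied by $\epsilon>0$, followed by iteration of the local theory and the Duhamel argument of Corollary \ref{cor1} for the $L^p$ persistence. The gap is in the step you yourself flag as the main obstacle: the $\|\partial_x u^\epsilon\|_\infty$ bound. Differentiating \eqref{eq_WSR_1} and setting $v=\partial_x u^\epsilon$ gives
\begin{equation*}
\partial_t v=(u^2+\epsilon)v_{xx}+6uvv_x+2v^3+\partial_{xx}(u\,K*v),
\end{equation*}
and at a spatial maximum of $v$ the transport term vanishes, $(u^2+\epsilon)v_{xx}\le 0$ gives only a sign with no quantitative strength, and the source $2v^3$ is positive and cubic. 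The resulting differential inequality $\tfrac{d}{dt}\|v\|_\infty\lesssim \|v\|_\infty^3+\cdots$ blows up in finite time, so it does not ``prevent blowup of $\|v(t)\|_\infty$'' on an arbitrary $[0,T]$. The ``cancellation between this source and the strict negativity of $(u^2+\epsilon)\partial_{xx}v$'' that you invoke is not available in a pointwise-extremum argument: knowing $v_{xx}\le 0$ at the maximum gives you nothing to cancel against. (This is exactly why, in Lemma \ref{lem301} for the local theory, the paper only closes this estimate for small time.)

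The paper resolves this by citing the global gradient estimate for uniformly parabolic quasilinear equations (Lemma 11.16 of Lieberman, reproduced as Lemma \ref{lemM}): writing the equation as $\partial_t u=b(t,x,u,u_x)u_{xx}+a(t,x,u,u_x)$ with $b=z^2+\epsilon$ and $a=2zp^2+2z^2-(\partial_xK*u_x)z-(\partial_xK*u)p$, the Bernstein--Nagumo structure condition $\beta_0\,b\,p^2\ge|a|$ for $|p|\ge\beta_1$ holds with $\beta_0\sim(\|u\|_\infty+1)/\epsilon$ precisely because $b\ge\epsilon>0$ and $a$ grows only quadratically in $p$; combined with the oscillation bound $M\sim\|u\|_\infty$ from Step 1 this yields $\sup_{t\le T}\|u_x(t)\|_\infty\le 2(L_0+\beta_1)e^{\beta_0 M}$, an $\epsilon$-dependent but $T$-uniform-in-structure bound. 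That lemma is proved by barrier/comparison arguments in doubled variables, not by evaluating the equation for $v$ at an extremum, so your sketch needs to be replaced by (or reduced to) an appeal to such a result. Everything else in your proposal --- positivity, the $L^2$ and $L^\infty$ bounds, the iteration scheme, and the $L^p$ ($1\le p<2$) persistence --- is sound and consistent with the paper.
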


\subsection{Proof of Proposition \ref{prop1}}
Our proof of Proposition \ref{prop1} is reminiscent of the
$L_x^\infty$ energy method used by Otani and Sugiyama \cite{OS01}
where they dealt with the porous medium equation \eqref{eq_poro}.
Denote the set
\begin{align*}
 B^k_{T} = \bigl\{ v\in C([0,T]; \, H^{2k+1}_x(\R)):\; \partial_{xx} v, \partial_t v \in L^2([0,T];\,H^{2k}_x(\R)) \bigr\}.
\end{align*}
As a very first step, we shall show the local existence of the
solution to \eqref{eq_WSR_1} in $B^k_T$. At this point, we need the
following lemma from \cite{OS01}.

\begin{lem} \label{lem518}
Consider the initial value problem
\begin{align}
 \begin{cases}
  \partial_t u = (u^2 +\epsilon) \partial_{xx} u + h(t,x), \\
  u(0) =u_0,
 \end{cases}
\end{align}
where $h \in L^2(0,T; \, H^{2k}_x)$, $u_0 \in H^{2k+1}_x(\R)$, there
exists a unique function $u=u(t,x) \in B^k_T$ which satisfies, for some constant
$C_1=C_1(\| u_0\|_{H^{2k+1}_x}, \| h\|_{L_t^2 H_x^{2k}},
k,\epsilon)$, that
\begin{align*}
 \| u\|_{L_t^\infty H^{2k+1}_x} + \epsilon^{\frac 12} \| u\|_{L_t^2 H^{2k+2}_x} \le C_1.
\end{align*}
For any $h_1$, $h_2 \in K_R^T=\bigl\{v:\, \| v\|_{L^2(0,T;\, H^{2k}_x)} \le R \bigr\}$, there exists
a constant $C_2=C_2(R,k,\epsilon)$ such that
\begin{align*}
 \| u_1 - u_2 \|_{L_t^\infty H^{2k+1}_x} + \epsilon^{\frac 12} \| u_1 -u_2 \|_{L_t^2 H^{2k+2}_x}
\le C_2 e^{C_2 T} \| h_1 -h_2 \|_{L_t^2 H^{2k}_x}.
\end{align*}

\end{lem}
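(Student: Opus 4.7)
My plan is to construct $u$ by a Picard iteration that freezes $u^2$ at the previous iterate, and to derive both stated bounds from the same $H^{2k+1}_x$-type energy estimate. Because $\epsilon>0$, the linearized operator $\partial_t-(v^2+\epsilon)\partial_{xx}$ is uniformly parabolic (with ellipticity constant $\epsilon$), and all the constants in $C_1, C_2$ are allowed to degenerate as $\epsilon\downarrow 0$; the issue is one of quasilinear high-regularity bookkeeping, not of degeneracy.

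First I would fix $v$ in a closed ball of $C([0,T_0];H^{2k+1}_x)$ and solve the linearized problem $\partial_t u=(v^2+\epsilon)\partial_{xx} u+h$ with $u(0)=u_0$. Since in one space dimension $H^{2k+1}_x$ embeds into $L_x^\infty$ and is a Banach algebra, the coefficient $v^2+\epsilon$ lies in $C([0,T_0];H^{2k+1}_x)$ and is bounded below by $\epsilon$. Standard linear parabolic theory with smooth coefficients (e.g.\ Galerkin projection onto eigenfunctions of $1-\partial_{xx}$, or the Lions--Magenes framework) then yields a unique $u\in B^k_{T_0}$ with maximal regularity $u\in L_t^2 H^{2k+2}_x\cap L_t^\infty H^{2k+1}_x$, $\partial_t u\in L_t^2 H^{2k}_x$. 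To close the a priori estimate I would apply $\partial_x^j$ for $0\le j\le 2k+1$, test against $\partial_x^j u$, and sum: the leading term furnishes the dissipation $\epsilon\|u\|_{H^{2k+2}_x}^2$, while the commutator $[\partial_x^j,v^2]\partial_{xx} u$ and the forcing contribute
\begin{align*}
C(\|v\|_{H^{2k+1}_x})\|u\|_{H^{2k+1}_x}\|u\|_{H^{2k+2}_x}+\|h\|_{H^{2k}_x}\|u\|_{H^{2k+2}_x}+C\|u\|_{H^{2k+1}_x}^2.
\end{align*}
Young's inequality absorbs the top-order factors into the dissipation and Gr\"onwall yields $\|u\|_{L_t^\infty H^{2k+1}_x}+\epsilon^{1/2}\|u\|_{L_t^2 H^{2k+2}_x}\le C_1$.

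The fixed point is then closed by applying the same type of estimate to the difference of two iterates: $\Phi\colon v\mapsto u$ is a contraction in $C([0,T_0];H^{2k+1}_x)$ for $T_0$ small in terms of $\epsilon,k$ and the data, with the length of the interval bounded below uniformly in terms of the a priori bound, so standard continuation covers the full time interval. For the Lipschitz-in-$h$ statement, the difference $w=u_1-u_2$ (same initial data) satisfies
\begin{align*}
\partial_t w=(u_1^2+\epsilon)\partial_{xx} w+(u_1+u_2)\,w\,\partial_{xx} u_2+(h_1-h_2),
\end{align*}
and an analogous $H^{2k+1}_x$ energy estimate, using that $\|u_j\|_{L_t^\infty H^{2k+1}_x}$ is already controlled in terms of $R$ and $\epsilon$ by the first part of the lemma, together with Gr\"onwall produces the claimed second inequality.

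The hard part will be the commutator estimate for $[\partial_x^j,v^2]\partial_{xx} u$ at $j=2k+1$: one has to interpolate via Gagliardo--Nirenberg (and use the algebra property of $H^{2k+1}_x$) so that the top-order factor $\|u\|_{H^{2k+2}_x}$ appears only to a power strictly less than two and can be absorbed into $\epsilon\|u\|_{H^{2k+2}_x}^2$. This absorption is what ultimately fixes the explicit $\epsilon$-dependence in $C_1$ and $C_2$, and is the only calculation that requires genuine care.
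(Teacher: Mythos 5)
The paper never proves this lemma itself---the stated proof is just the citation ``See \cite{OS01}''---so your attempt must stand on its own. The skeleton (freeze $v$, solve the linear uniformly parabolic problem, run an $H_x^{2k+1}$ energy estimate with dissipation $\epsilon\|u\|_{H^{2k+2}_x}^2$, close a contraction) is reasonable, and the pieces you actually carry out are fine. The genuine gap is the clause ``with the length of the interval bounded below uniformly in terms of the a priori bound, so standard continuation covers the full time interval.'' The lemma asserts existence on the \emph{whole} prescribed interval $[0,T]$ with $C_1$ depending only on $\|u_0\|_{H^{2k+1}_x}$, $\|h\|_{L_t^2H_x^{2k}}$, $k$, $\epsilon$. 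But once $v$ is replaced by the solution $u$ itself, the Gr\"onwall factor in your top-order estimate is $C(\|u\|_{H^{2k+1}_x})$ (the commutator terms involve $\|u\|_{W^{2,\infty}_x}\lesssim\|u\|_{H^{2k+1}_x}$), so the differential inequality is superlinear in the quantity being estimated and yields only a local-in-time bound. You never supply the global a priori bound that the continuation argument is supposed to feed on, so as written the argument proves only local existence.

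Closing this requires an ingredient that exploits the structure of $\partial_t u=(u^2+\epsilon)\partial_{xx}u+h$ rather than a generic commutator estimate: the maximum principle gives $\|u(t)\|_{\infty}\le\|u_0\|_{\infty}+\int_0^t\|h(s)\|_{\infty}\,ds$ on all of $[0,T]$, and for $v=u_x$ one has the cancellation $\int(u^2+\epsilon)v_{xx}|v|^{p-2}v\,dx+2\int uvv_x|v|^{p-2}v\,dx=-(p-1)\int(u^2+\epsilon)v_x^2|v|^{p-2}\,dx\le0$, whence $\frac{d}{dt}\|u_x\|_p\le\|h_x\|_p$ and, letting $p\to\infty$, a global bound on $\|u_x\|_{\infty}$. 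Only after such low-order quantities are controlled on the full interval does the $H^{2k+1}_x$ Gr\"onwall become linear and produce $C_1$ as claimed. This low-to-high cascade is precisely the ``$L^\infty$-energy method'' of \cite{OS01} that the paper is invoking, and is replicated by the authors for the full equation in the proof of Lemma \ref{lem301}. A secondary, fixable point: in your Lipschitz-in-$h$ step the term $(u_1+u_2)w\,\partial_{xx}u_2$ cannot be measured in $H^{2k+1}_x$ (that would need $u_2\in H^{2k+3}_x$); one must integrate by parts once, pay with the dissipation, and use that $\epsilon^{1/2}\|u_2\|_{L_t^2H^{2k+2}_x}$ is already controlled by the first part of the lemma.
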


\begin{proof}
 See \cite{OS01}.
\end{proof}

Based on this lemma, we establish the local existence
of solutions of the regularized equation. Note however that the time
of existence of the solution depends on the regularization parameter
$\epsilon$. We have the following

\begin{lem}[Local existence of the regularized equation] \label{lem_local_exist}
 Let $u_0 \in H^{2k+1}_x(\R)$, $k\ge 2$. There exists $T_0=T_0(\|u_0\|_{H^{2k+1}},k, \epsilon)>0$, such that
\eqref{eq_WSR_1} has a unique solution $u \in B^k_{T_0}$.
\end{lem}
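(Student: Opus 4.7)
The plan is to realize a solution of the regularized equation \eqref{eq_WSR_1} as the fixed point of a map $\Phi$ that uses Lemma \ref{lem518} as its inner solver. Rewriting \eqref{eq_WSR_1} in the form
\begin{equation*}
\partial_t u = (u^2+\epsilon)\partial_{xx}u + h(u), \qquad h(w) := 2w(\partial_x w)^2 - \partial_x(w\,K*\partial_x w),
\end{equation*}
I would, for each $v$ in a suitable closed ball, set $\Phi(v) := u$ to be the unique element of $B^k_T$ produced by Lemma \ref{lem518} with initial datum $u_0$ and forcing $h(v)$. A fixed point of $\Phi$ is then a classical solution of \eqref{eq_WSR_1} in $B^k_T$.

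First I would establish that $h$ is a well-defined nonlinear map from $H^{2k+1}_x$ into $H^{2k}_x$ with polynomial bounds. Since $k\ge 2$, $H^{2k+1}(\R)$ is a Banach algebra in one dimension; for the nonlocal term, $K(x)=e^{-|x|}\in L^1_x\cap W^{1,\infty}_x$, so transferring the derivative onto $K$ (giving $K*\partial_x v = (\partial_x K)*v$) and using Young's inequality together with the algebra property yield
\begin{equation*}
\|h(w)\|_{H^{2k}}\le C(k)\bigl(\|w\|_{H^{2k+1}}^3+\|w\|_{H^{2k+1}}^2\bigr),
\end{equation*}
and similarly $\|h(v_1)-h(v_2)\|_{H^{2k}}\le C(R)\|v_1-v_2\|_{H^{2k+1}}$ whenever $\|v_i\|_{H^{2k+1}}\le R$. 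I would then work in the ball
\begin{equation*}
\mathcal{B}_{R,T}:=\bigl\{v\in B^k_T:\ v(0)=u_0,\ \|v\|_{L^\infty_tH^{2k+1}_x}\le R\bigr\},
\end{equation*}
choosing $R$ slightly larger than the constant $C_1(\|u_0\|_{H^{2k+1}},0,k,\epsilon)$ of Lemma \ref{lem518}. The output bound in Lemma \ref{lem518} combined with $\|h(v)\|_{L^2_tH^{2k}_x}\le T^{1/2}C(R,k)$ shows that for $T=T_0$ small (depending on $\|u_0\|_{H^{2k+1}}, k, \epsilon$), $\Phi$ maps $\mathcal{B}_{R,T_0}$ into itself. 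For the contraction, the difference estimate of Lemma \ref{lem518} combined with the Lipschitz bound on $h$ gives
\begin{equation*}
\|\Phi(v_1)-\Phi(v_2)\|_{L^\infty_tH^{2k+1}_x}\le C_2 e^{C_2 T_0}\,T_0^{1/2}\,C(R,k)\,\|v_1-v_2\|_{L^\infty_tH^{2k+1}_x},
\end{equation*}
which becomes a strict contraction after further shrinking $T_0$. Banach's fixed-point theorem then produces a unique $u\in\mathcal{B}_{R,T_0}$ solving \eqref{eq_WSR_1}; uniqueness in the full class $B^k_{T_0}$ follows by a standard energy argument applied to the difference of two candidate solutions.

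The main obstacle is the bookkeeping between norms: the norm in which Lemma \ref{lem518} delivers its output ($L^\infty_tH^{2k+1}_x$) is stronger than the norm in which the difference estimate is stated ($L^2_tH^{2k}_x$), so one has to be careful that $\mathcal{B}_{R,T_0}$ is closed under the iteration norm and that the constants $C_1,C_2$ remain usable as $v$ varies through the ball. This is also why $T_0$ here depends on the regularization parameter $\epsilon$; it is only in the subsequent $\epsilon$-independent \emph{a priori} energy estimates, which exploit the genuine degenerate diffusion rather than treating it as a linear porous-medium-type operator, that one recovers the $\epsilon$-free existence time asserted in Proposition \ref{prop1}.
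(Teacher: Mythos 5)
Your proposal is correct and is essentially the paper's own argument: both reduce \eqref{eq_WSR_1} to the auxiliary linear-in-forcing problem solved by Lemma \ref{lem518} and close a Banach fixed point on a small, $\epsilon$-dependent time interval, with the same algebra-property and Young's-inequality bounds on the nonlinearity supplying the factor $T^{1/2}$ that makes the map a contraction. The only (cosmetic) difference is that the paper iterates on the forcing $h$ in the ball $K_R^{T_0}\subset L^2(0,T_0;H^{2k}_x)$ via $\phi(h)=2u_h(\partial_x u_h)^2-\partial_x(u_h\,K*\partial_x u_h)$, whereas you iterate on the solution itself in $L^\infty_tH^{2k+1}_x$; these are equivalent formulations of the same Picard scheme.
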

\begin{proof}
 Introduce the auxiliary equation
\begin{align} \label{eq538}
 \begin{cases}
  \partial_t u = (u^2+\epsilon)\partial_{xx} u + h(t,x) \\
 u(0)=u_0
 \end{cases}
\end{align}
By Lemma \ref{lem518}, for any $u_0 \in H^{2k+1}_x(\R)$, $h\in
L^2(0,T;H^{2k}_x)$, \eqref{eq538} has a unique solution $u_h \in
B^k_T$. Next define the map
\begin{align*}
 \phi(h) = 2 u_{h} (\partial_x u_h)^2 - \partial_x ( u_h K*\partial_x u_h).
\end{align*}
It suffices to show that for some suitable $R$ and $T_0$, $\phi$ is a contraction from the set
$K^{T_0}_R = \bigl\{ v\in L^2(0,T_0;\, H^{2k}_x);\, \| v\|_{L^2(0,T_0;H^{2k}_x)} \le R \bigr\}$ into itself.
First we show that $\phi$ maps $K_R^{T_0}$ into $K_R^{T_0}$. Take any $h\in K_R^T$, by the fact that $H^{s}_x(\R)$ is an
algebra when $s>\frac 12$, we have
\begin{align*}
 \| \phi(h) \|_{H^{2k}} & \le \|u_h \|_{H^{2k}} \| u_h \|_{H^{2k+1}}^2 + \| \partial_x K* \partial_x u_h u_h\|_{H_x^{2k}}
+ \| K* \partial_x u_h \partial_x u_h \|_{H^{2k}} \\
& \le C (\|u_h \|_{H^{2k+1}}^3+\|u_h\|_{H^{2k+1}}^2),
\end{align*}
where $C$ is a generic constant. Then Lemma \ref{lem518} implies that
\begin{align*}
 \| \phi(h) \|_{L^2(0,T;\, H^{2k}_x)} &\le CT^{\frac 12} (\|u_h \|_{L_t^{\infty}(0,T;H^{2k+1})}^3
 +\|u_h \|_{L_t^{\infty}(0,T;H^{2k+1})}^2)\\
& \le T^{\frac 12} C_3(\|u_0\|_{H_x^{2k+1}}, R,k,\epsilon).
\end{align*}
This shows  that if we take $T \le C_3^2/R^2$, then $\phi$ maps $K_R^T$ into $K_R^T$.
Similarly we can show that $\phi$ is Lipschitz, i.e. for any $h_1$, $h_2 \in K_R^T$, we have
\begin{align*}
 \| \phi(h_1) - \phi(h_2) \|_{L^2(0,T;\, H^{2k}_x)} \le C_4 T^{\frac 12} e^{C_4 T}
\| h_1- h_2\|_{L^2(0,T;\, H^{2k}_x)},
\end{align*}
where $C_4=C_4(\|u_0\|_{H^{2k+1}}, R,k,\epsilon)$. Therefore if $T$
is sufficiently small, then $\phi$ is a contraction on $K_R^T$. The
lemma is proved.
\end{proof}

Lemma \ref{lem_local_exist} is not satisfactory since the time of
existence of the solution depends on the regularization parameter
$\epsilon$. The following lemma removes this
dependence, and at the same time, weakens the dependence on the
initial norm.

\begin{lem} \label{lem301}
 Let $u_0 \in H^{2k+1}_x(\R)$, $k\ge 2$. Then there exists $T_0=T_0(\|\partial_x u_0\|_{\infty}
  +\|u_0\|_2)$ such that
for any $\epsilon>0$, \eqref{eq_WSR_1} has a unique solution $u^\epsilon \in B^k_{T_0}$.
 If $u_0\ge 0$, then $u^\epsilon(t)\ge 0$
for any $0\le t\le T_0$.
\end{lem}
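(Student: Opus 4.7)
The plan is to upgrade the local solution from Lemma \ref{lem_local_exist}, whose lifespan depends on $\epsilon$ and on the full $H^{2k+1}_x$ norm of $u_0$, into one whose lifespan depends only on $\|u_0\|_{2}+\|\partial_x u_0\|_{\infty}$ and is independent of $\epsilon$. First I would extract a continuation criterion from Lemma \ref{lem_local_exist}: the local $B^k_{T}$ solution $u^\epsilon$ persists as long as $\|u^\epsilon(t)\|_{H^{2k+1}_x}$ remains finite. Matters then reduce to producing $\epsilon$-uniform a priori bounds on $\|u^\epsilon\|_{2}+\|\partial_x u^\epsilon\|_{\infty}$ on some time interval $[0,T_0]$ depending only on the initial data, and then bootstrapping these to a (possibly $\epsilon$-dependent but finite) bound on $\|u^\epsilon\|_{H^{2k+1}_x}$.

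For the $L^2_x$ bound I would multiply \eqref{eq_WSR_1} by $u$ and integrate. After integration by parts the diffusion and the term $2u(\partial_x u)^2$ combine into the nonpositive quantity $-\int(u^2+\epsilon)(\partial_x u)^2\,dx$, while the nonlocal term rewrites as $-\tfrac12\int u^2(\partial_{xx}K*u)\,dx$; for $K=e^{-|x|}$ the distributional identity $\partial_{xx}K=K-2\delta$ shows this is controlled by $C(\|u\|_\infty+\|u\|_2)\|u\|_2^2$. For the $\|\partial_x u\|_\infty$ bound I would differentiate \eqref{eq_WSR_1} once, setting $w=\partial_x u$, to obtain
\begin{equation*}
  \partial_t w=(u^2+\epsilon)\partial_{xx}w+6uw\,\partial_x w+2w^3-\partial_x w\,(K*w)-2w(K*\partial_x w)-u(K*\partial_{xx}w),
\end{equation*}
and then apply a pointwise maximum principle: at any spatial extremum of $w$ one has $\partial_x w=0$ and $w\,\partial_{xx}w\le 0$, so the degenerate diffusion contributes with the favorable sign, and using $\|K\|_1,\|\partial_x K\|_1<\infty$ to bound the nonlocal terms yields the ODI
\begin{equation*}
  \tfrac{d}{dt}\|w\|_\infty\lesssim \|w\|_\infty^3+\|w\|_\infty^2+\|u\|_\infty\,\|w\|_\infty.
\end{equation*}
The one-dimensional interpolation $\|u\|_\infty^3\lesssim \|u\|_2^2\|\partial_x u\|_\infty$ (proved by integrating $u^2$ over the interval on which $|u|\ge \tfrac12\|u\|_\infty$, whose length is at least $\|u\|_\infty/(2\|\partial_x u\|_\infty)$) couples the two inequalities into a closed Gronwall-type system for $\|u\|_2$ and $\|\partial_x u\|_\infty$, whose solutions remain finite on a time interval $[0,T_0]$ with $T_0=T_0(\|u_0\|_2+\|\partial_x u_0\|_\infty)$ and independent of $\epsilon$.

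Once these low-regularity bounds are in hand on $[0,T_0]$, I would close the higher regularity estimate by a standard $H^{2k+1}_x$ energy argument, where the coercivity of the $\epsilon\,\partial_{xx}$ piece absorbs the top-order terms and all remaining commutators are controlled by the uniform bounds on $\|u\|_\infty$ and $\|\partial_x u\|_\infty$ already secured; the resulting bound on $\|u^\epsilon(t)\|_{H^{2k+1}_x}$ will in general depend on $\epsilon$, but it is finite on $[0,T_0]$, which via the continuation criterion forces $u^\epsilon$ to exist on the whole interval $[0,T_0]$. For the nonnegativity claim, I would invoke the classical parabolic maximum principle: since $\epsilon>0$ renders \eqref{eq_WSR_1} uniformly parabolic, applying the principle to $e^{-Ct}u^\epsilon$ with $C$ large enough to dominate the zero-order contribution from the nonlocal drift at a hypothetical first negative minimum (where $\partial_x u^\epsilon=0$, $\partial_{xx}u^\epsilon\ge 0$) produces a contradiction and gives $u^\epsilon\ge 0$ on $[0,T_0]$.

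The main obstacle is the $\|\partial_x u\|_\infty$ estimate. The diffusion on $w$ degenerates wherever $u=0$, so one cannot rely on parabolic smoothing alone, and the term $uK*\partial_{xx}w$ in the evolution of $w$ cannot be absorbed by $L^\infty_x$ interpolation of $w$ alone. The structural property of $K$, namely Lipschitz regularity at the origin with integrable $\partial_{xx} K$ (measure) as for $K=e^{-|x|}$, is precisely what makes $\|K*\partial_{xx}w\|_\infty\lesssim \|w\|_\infty$ and allows the pointwise closure; a secondary technicality is verifying that the spatial maximum of $w$ is attained and evolves in a $C^1_t$ manner, which follows from the $C([0,T];H^{2k+1}_x)$ regularity provided by Lemma \ref{lem_local_exist} with $2k+1\ge 5$.
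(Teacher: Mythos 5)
Your overall architecture coincides with the paper's: reduce to $\epsilon$-uniform a priori bounds on $\|u\|_{2}$, $\|u\|_{\infty}$ and $\|\partial_x u\|_{\infty}$ on a time interval determined by these norms alone, then propagate higher regularity and invoke the continuation afforded by Lemma \ref{lem_local_exist}, with positivity from the maximum principle. Within that frame you make two genuinely different choices. First, the paper closes the low-norm system by completing a square in the $L^p$ energy identity to get $\|u(t)\|_p\le\|u_0\|_p e^{Ct}$ for all $2\le p<\infty$ \emph{independently} of $\|u\|_\infty$, and then runs a separate $p\to\infty$ argument for $\|u\|_\infty$; you instead accept an $L^2$ inequality coupled to $\|u\|_\infty$ and eliminate $\|u\|_\infty$ via the interpolation $\|u\|_\infty^3\lesssim\|u\|_2^2\|\partial_x u\|_\infty$, obtaining a two-component Riccati-type system in $(\|u\|_2,\|\partial_x u\|_\infty)$. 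This is slicker and equally valid for the lemma, though it forfeits the clean exponential $L^p$ bounds that the paper later reuses (e.g.\ in Theorem \ref{thm2}). Second, for $\|\partial_x u\|_\infty$ you use a pointwise evaluation at the spatial extremum of $w=\partial_x u$, where the paper deliberately avoids the maximum principle and instead runs $L^p$ energy estimates on $v=u_x$ and lets $p\to\infty$; both yield the same ODI $\frac{d}{dt}\|w\|_\infty\lesssim\|w\|_\infty^3+\|w\|_\infty^2+\|u\|_\infty\|w\|_\infty$, and your identification of $\partial_{xx}K=K-2\delta$ as the mechanism controlling $u\,(K*\partial_{xx}w)$ is exactly the structural point the paper exploits.

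The one place where your sketch is too quick is the higher-regularity step. It is not true that the $H^{2k+1}_x$ energy estimate closes using only the secured bounds on $\|u\|_\infty$ and $\|\partial_x u\|_\infty$: the commutator terms produced by $\partial_x^{2k+1}$ acting on $\partial_x(u^2\partial_x u)$ involve $\|\partial_{xx}u\|_\infty$ and $\|u\|_{H^3}$ (see the terms $\int(u_x^2+uu_{xx})(\partial_x^{2k+1}u)^2$ and the Case 2--4 interpolations in the paper's Step 4). If you try to dispose of these by interpolating $\|u_{xx}\|_\infty$ against $\|u\|_{H^{2k+1}}$, the resulting ODI becomes superlinear in the top energy and the guaranteed lifespan would then shrink with $\|u_0\|_{H^{2k+1}}$ (and possibly $\epsilon$), defeating the whole point of the lemma, namely $T_0=T_0(\|u_0\|_2+\|\partial_x u_0\|_\infty)$. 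The fix is the cascade the paper carries out in its Step 3: first bound $\|\partial_{xx}u\|_{L^p}$ and $\|\partial_{xxx}u\|_2$ (hence $\|\partial_{xx}u\|_\infty$ by interpolation) by \emph{linear} Gronwall inequalities whose coefficients involve only the already-controlled lower norms, and only then perform the $\partial_x^{2k+1}$ estimate, which is then again linear on the same interval $[0,T_0]$. With that induction inserted, your argument is complete and correct.
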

\begin{proof}
By Lemma \ref{lem_local_exist}, we can continue the solution as long
as we can control the ${H_x^{2k+1}}$-norm of $u^\eps$. 
In the following we give the a priori control of $u^\epsilon$. As we
will see, the $H^{2k+1}_x$-norm of $u^{\eps}$ will stay bounded on a time
interval $[0,T_0]$ for a certain small $T_0$ depending only on the norms $\|\partial
_xu_0\|_{\infty}$, $\|u_0\|_2$. We
are then left with the task of estimating the various Sobolev norms which we
will do in several steps. For simplicity of notations we shall write
$u^\epsilon$ as $u$ throughout this proof.


\texttt{Step 1}: $L_x^p$-norm estimate for $2\le p \le +\infty$.
Multiply both sides of \eqref{eq_WSR_1} by $|u|^{p-2} u$ and
integrate over $x$, we have
\begin{align} \label{eq1144}
 \frac 1 p \frac d {dt} \| u(t) \|_p^p & =-\epsilon(p-1) \int |u|^{p-2} u_x^2 dx  -(p-1) \int u_x^2 |u|^p dx \notag \\
&\qquad+ (p-1) \int (\partial_x K*  u) u |u|^{p-2} \partial_x u dx \\
&\le -(p-1) \int |u|^{p-2} \left( uu_x - \frac {\partial_x K*u} 2 \right)^2 dx \notag \\
&\qquad+ \frac {p-1} 4 \int |u|^{p-2} (\partial_x K*u )^2 dx \notag\\
& \le  \frac {p-1} 4 \| \partial_x K * u \|_p^2 \| u\|_p^{p-2}  \notag\\
& \le \frac {p-1}4 \| \partial_x K \|_1^2 \| u\|_p^p. \notag
\end{align}
Gronwall's inequality implies that
\begin{align} \label{eq948}
 \| u(t) \|_p \le \|u_0\|_p \cdot \exp\left( \frac{p-1}4 \|\partial_x K\|_1^2 t \right), \quad\forall\, 2\le p<+\infty.
\end{align}
In particular we have
\begin{align}
 \| u(t) \|_2 \le \|u_0\|_2 \exp \left( \frac14 \|\partial_x K\|_1^2 t \right).
\end{align}
To control $L_x^\infty$-norm, we are going to choose $T_0<\frac 1
{\|\partial_x K\|_1}$. This implies that $\|u(t)\|_{2}$ has a
uniform bound on any such $[0,T_0]$. Now use again the RHS of
\eqref{eq1144} to get
\begin{align*}
 \frac 1 p \frac d {dt} \| u(t) \|_p^p & \le (p-1) \int (\partial_x K*  u) u |u|^{p-2} \partial_x u dx \\
& =\frac {p-1} p \int (\partial_x K* u) \partial_x (|u|^p) dx \\
&=-\frac {p-1} p \int (\partial_{xx}K*u) |u|^p dx.
\end{align*}
Now we use the fact that $K(x) =e^{-|x|}$ and 
therefore $\partial_{xx} K(x) = -2 \delta(x) +e^{-|x|}$, and this gives 
\begin{align*}
  \frac 1 p \frac d {dt} \| u(t) \|_p^p & \le -\frac {2(p-1)} p \int u |u|^p dx + \frac {p-1} p \int (e^{-|x|}*u) |u|^p dx \\
& \le \frac {2(p-1)} p \| u(t) \|_{p+1}^{p+1} + \frac {p-1} p \| u(t) \|_p^p \| e^{-|x|}\|_2 \| u\|_2 \\
&\lsm \frac {2(p-1)} p \| u(t) \|_{p+1}^{p+1} + \frac {p-1} p \| u(t) \|_p^p \\
& \lsm \frac {2(p-1)} p \| u\|_p^p \|u\|_\infty + \frac {p-1} p \| u\|_p^p.
\end{align*}
This implies
\begin{align*}
 \frac d {dt} \|u(t)\|_p \lsm \frac {2(p-1)} p \|u\|_p \|u\|_{\infty} + \frac {p-1} p \| u(t) \|_p.
\end{align*}
Integrating over $[0,t]$, we obtain
\begin{align*}
 \| u(t) \|_p \le \|u_0\|_p + C\int_0^t (1+\|u(s)\|_{\infty}) \|u(s)\|_p ds.
\end{align*}
Letting $p\to \infty$, we obtain
\begin{align*}
 \|u(t) \|_\infty \le \|u_0\|_\infty + C \int_0^t (1+\|u(s)\|_\infty) \|u(s)\|_\infty ds.
\end{align*}
Now it's easy to see that if we choose $T_0$ sufficiently small
depending on $\|u_0\|_2,\|u_0\|_\infty$, then we have
\begin{align}\label{ru}
 \sup_{0\le t\le T_0} \|u(t)\|_{r} \le C \| u_0 \|_r, \quad \forall\, 2\le r\le \infty.
\end{align}

\texttt{Step 2}: Control of $\|\partial_x u(t) \|_\infty$. At this point, one can appeal directly to the maximum principle
(see for example Theorem 11.16 of \cite{Lie96}). We give a rather direct estimate here without using the maximum principle.
Differentiating both sides of \eqref{eq_WSR_1} w.r.t $x$ and denoting $v=u_x$, we have
\begin{align} \label{eq1242}
 \partial_t v = \epsilon v_{xx}+u^2 v_{xx} +6 u vv_x +2 v^3 +G,
\end{align}
 where
\begin{align*}
G= \partial_{xx} (uK*v ).
\end{align*}
Multiplying both sides of \eqref{eq1242} by $|v|^{p-2}v$ and integrating over $x$, we obtain
\begin{align} \label{eq1247}
 \frac 1 p \frac d {dt} \| v(t) \|_p^p & = \int u^2 v_{xx} |v|^{p-2} v dx + \epsilon \int v_{xx} |v|^{p-2} vdx +6 \int u v_x |v|^p dx \notag\\
&\quad + 2 \int|v|^{p+2} dx + \int G(x) |v|^{p-2} vdx.
\end{align}
Integrate by parts and we have
\begin{align*}
 \int u^2 v_{xx} |v|^{p-2} v dx &= -2 \int u v_x |v|^p dx - (p-1) \int u^2 v_x^2 |v|^{p-2} dx,\\
\int u v_x |v|^p dx &= - \frac 1{1+p} \int |v|^{p+2} dx \le 0,\\
\epsilon \int v_{xx} |v|^{p-2} vdx &= -\epsilon(p-1) \int |v|^{p-2} v_x^2 dx \le 0.
\end{align*}
Plugging the above estimates into \eqref{eq1247}, we get
\begin{align*}
 \frac 1p \frac d{dt} \|v(t)\|_p^p \le 2 \int |v|^{p+2} dx + \int G(x) |v|^{p-2} v dx.
\end{align*}
Now we compute
\begin{align*}
 \int G |v|^{p-2} v dx =& \int \partial_{xx} (K*v u) |v|^{p-2} v dx \\
=& \int \partial_{xx}K* v \, u |v|^{p-2} v dx + 2 \int \partial_x K*v |v|^p dx + \int K*v v_x |v|^{p-2} v dx \\
\lsm & \int |u| |v|^p dx + \int( e^{-|x|}* v) u |v|^{p-2} v dx + \| v\|_p^p \|v \|_\infty\\
\lsm &\|v\|_p^p (\| u\|_\infty + \| v\|_\infty).
\end{align*}
Finally we obtain
\begin{align*}
 \frac 1 p \frac d {dt} \| v(t) \|_p^ p & \lsm \|v \|_{p+2}^{p+2} + \| v\|_p^p ( \| u\|_\infty + \|v\|_\infty) \\
& \lsm \| v\|_p^p ( \|v\|_\infty^2 + \|v\|_\infty + \|u\|_\infty).
\end{align*}
Integrating over $t$, letting $p\to \infty$, and recalling $v=u_x$, we have
\begin{align}\label{rux}
 \sup_{0\le t\le T_0} \|\partial_x u(t) \|_\infty & \le C \| \partial_x
 u_0\|_\infty,
\end{align}
for sufficiently small $T_0=T_0(\|u_0\|_\infty, \|u_0\|_2, \| \partial_x u_0 \|_\infty)$.

\texttt{Step 3:} Control of $\| \partial_{xx} u(t) \|_{p}$ for $2\le
p\le +\infty$. First take $ 2\le p <+\infty$ and compute
\begin{align*}
 &\frac 1 p \frac d {dt} \int |\partial_{xx} u |^p dx  \\
 = & \frac 13 \int |\partial_{xx} u|^{p-2} \partial_{xx} u \cdot (u^3)_{xxxx} dx
+ \epsilon \int |\partial_{xx} u|^{p-2} \partial_{xx} u u_{xxxx} dx \\
&\quad -\int |\partial_{xx} u|^{p-2} \partial_{xx} u \partial_{xxx}((\partial_x K*u) u) dx.
\end{align*}
We further estimate by using integration by parts,
\begin{align} \label{eqMay222}
 & \int |\partial_{xx} u|^{p-2} \partial_{xx} u \cdot (u^3)_{xxxx} dx \notag\\
= & 3 \int |\partial_{xx} u|^{p-2} \partial_{xx} u \cdot u^2 \partial_{xxxx} udx +24 \int |\partial_{xx} u|^{p-2}
\partial_{xx} u\cdot u\partial_x u \partial_{xxx} u dx \notag\\
&\quad +18\int |\partial_{xx} u|^p u \partial_{xx} u dx + 36\int |\partial_{xx} u|^p (\partial_x u)^2 dx \notag\\
= & - 3(p-1)\int (\partial_{xxx} u)^2 |\partial_{xx} u|^{p-2} u^2 dx +18(1-p)\int |\partial_{xx} u|^{p-2}
\partial_{xx} u\cdot u\partial_x u \partial_{xxx} u dx \notag\\
&\quad +18\int |\partial_{xx} u|^p (\partial_x u)^2 dx.
\end{align}
Using Cauchy Schwartz inequality:
$$
ab\le \frac 16 a^2+\frac 32 b^2,
$$
we bound the second term in \eqref{eqMay222} as
\begin{align*}
|18(1-p)\int &|u_{xx}|^{p-2}\partial_{xx} u u\partial_x
u\partial_{xxx}u dx| \\
&\le 27(p-1)\int|u_{xx}|^p|\partial_x u|^2 dx+3(p-1)\int
|u_{xx}|^{p-2} u^2|\partial_{xxx} u|^2 dx.
\end{align*}
Hence,
\begin{align*}
\int |\partial_{xx} u|^{p-2} \partial_{xx} u \cdot (u^3)_{xxxx} dx
\le & (27p-9)\int |\partial_{xx} u|^p (\partial_x u)^2 dx.
\end{align*}
Also it is obvious that
\begin{align*}
 &\epsilon \int |\partial_{xx} u|^{p-2} \partial_{xx} u u_{xxxx} dx \\
= & -\eps p \int (\partial_{xxx} u)^2 | \partial_{xx} u|^{p-2} dx
\le 0.
\end{align*}
Similarly we have
\begin{align*}
 & \int |\partial_{xx} u|^{p-2} \partial_{xx} u \cdot \partial_{xxx} ( (\partial_x K*u) u) dx \\
\lsm & \int |\partial_{xx} u|^{p-2} \partial_{xx} u \partial_{xxx} u (\partial_x K* u) dx + \int| \partial_{xx} u|^{p} \partial_x K* \partial_x udx \\
&\quad + \int |\partial_{xx} u|^{p-2} \partial_{xx} u \partial_x u \partial_x K* \partial_{xx} u dx + \int |\partial_{xx} u|^{p-2} \partial_{xx} u
u \partial_{xx} K* \partial_{xx} u dx \\
\lsm & \int |\partial_{xx} u|^p \partial_x K* \partial_x u dx + \int |\partial_{xx} u|^{p-2} \partial_{xx} u \partial_x u \partial_x K*
\partial_{xx} udx\\
&\quad + \int |\partial_{xx} u|^p u dx + \int |\partial_{xx} u|^{p-2} \partial_{xx} u u (e^{-|x|}* \partial_{xx} u) dx \\
\lsm & \| \partial_{xx} u \|_p^p \cdot ( \| \partial_x u\|_\infty + \| u \|_\infty).
\end{align*}
Collecting all the estimates and we have
\begin{align*}
 \frac 1 p \frac d {dt} \| \partial_{xx} u(t) \|_p^p & \lsm
\| \partial_{xx} u\|_{p}^{p} \| \partial_x u\|_\infty^2+ \| \partial_{xx} u \|_p^p ( \|\partial_x u\|_\infty^2
+\| \partial_x u\|_\infty+ \| u\|_\infty) \\
& \lsm \| \partial_{xx} u\|_p^p ( \|\partial_x u\|_\infty^2
+\| \partial_x u\|_\infty+ \| u\|_\infty).
\end{align*}
Integrating over $t$  we have
\begin{align}\label{ruxx}
 \sup_{0\le t\le T_0} \|\partial_{xx} u(t) \|_p  \le e^{CpT_0} \| \partial_{xx} u_0\|_p.
\end{align}
To estimate $\| u_{xx} \|_\infty$ we need to estimate $\| u_{xxx} \|_2$. We have
\begin{align*}
 \frac d {dt} \| u_{xxx}(t) \|_2^2 & \lsm- \int u_{xxxx} (u^3)_{xxxx} dx +\epsilon \int u_{xxxxx} u_{xxx} dx
- \int \partial_{xxxx}(\partial_x K*u u) u_{xxx} dx \\
& \lsm -\int u_{xxxx}^2 u^2 dx + \int u_{xxxx} (uu_x u_{xxx} + u_{xx}u_x^2 + u_{xx}^2 u) dx \\
 &\quad- \int \partial_{xxxx}(\partial_x K*u u) u_{xxx} dx \\
& \lsm \int (u_{xxx}^2 u_x^2 + u_{xx}^4 + u_{xx}^2 u_x^2) dx- \int \partial_{xxxx}(\partial_x K*u u) u_{xxx} dx.
\end{align*}
Since
\begin{align*}
 & \int \partial_{xxxx}( (\partial_x K*u) u) u_{xxx} dx \\
\lsm & \int (\partial_x K*\partial_x u) u_{xxx}^2 dx + \int (\partial_x K*\partial_{xx} u) \partial_{xx} u \partial_{xxx} u dx \\
&\quad + \int (\partial_x K*\partial_{xxx}u) uu_{xxx} dx + \int u (\partial_{xx}K*\partial_{xxx}u) u_{xxx} dx \\
\lsm & \| u_{xxx} \|_2^2 \|\partial_x u\|_\infty+ \| u_{xxx}\|_2 \| \partial_{xx} u\|_2^2 + \| \partial_{xxx} u\|_2^2 \| u\|_\infty.
\end{align*}
Finally we get
\begin{align*}
 \frac d {dt} \| u_{xxx}(t) \|_2^2 &\lsm \| u_{xxx}\|_2^2 ( \|u_x\|_2^2 + \| u_x\|_\infty +\|u\|_\infty) \\
&\quad + \|u_{xxx}\|_2 \| \partial_{xx} u\|_2^2 + \| u_{xx} \|_4^4 + \| u_{xx} \|_2^2 \| u_x\|_\infty^2.
\end{align*}
Gronwall then implies that
\begin{align}
 \sup_{0\le t\le T} \|\partial_{xxx} u(t) \|_2 \le C \| \partial_{xxx}
 u_0\|_2.\label{three grad}
\end{align}
Since $\|u_{xx}\|_\infty \lsm \|u\|_2^{\frac 16} \|\partial_{xxx} u\|_2^{\frac 56}$, we conclude that
\begin{align*}
 \sup_{0\le t \le T_0} \| u_{xx}(t)\|_{\infty} \le C.
\end{align*}

\texttt{Step 4}: Control of $\|\partial_x^{2k+1} u\|_2$. We compute
\begin{align*}
 \frac d {dt} \| \partial_x^{2k+1} u\|_2^2 & = \frac 13 \int \partial_x^{2k+3} (u^3) \partial_x^{2k+1} udx
+ \epsilon \int \partial_x^{2k+3} u \partial_x^{2k+1} udx \\
&\quad -\int \partial_x^{2k+2} (K*\partial_x u u) \partial_x^{2k+1} u dx \\
&=: I+II+III
\end{align*}
For $I$, we integrate by parts and obtain 
\begin{align}
I&= -\frac 13 \int \partial_x^{2k+2}(u^3) \partial_x^{2k+2} u dx \notag\\
& =-\int u^2(\partial_x^{2k+2}u)^2 dx-C\int uu_x\partial_x^{2k+1}u
\partial_x^{2k+2}u dx\notag\\
&\qquad +\sum_{\substack{l+m+n= 2k+2\\l\le m\le n\le
2k}}C_{l,m,n}\int\partial_x^l u\partial_x^m u\partial_x^n
u\partial_x^{2k+2} udx\notag\\
&=-\int u^2(\partial_x^{2k+2}u)^2 dx+\frac C
2\int(u_x^2+uu_{xx})(\partial_x^{2k+1}u)^2 dx\label{049}\\
&\qquad+\sum_{\substack{l+m+n=2k+3\\l\le m\le n\le 2k+1}}C_{l,m,n}
\int\partial_x ^l u\partial_x^m u\partial x^n u\partial_x^{2k+1}u
dx\label{050}
\end{align}
Discarding the negative term and using H\"older's inequality, we
bound \eqref{049} as
\begin{equation}
\eqref{049}\lsm
(\|u_x\|_{\infty}^2+\|u\|_{\infty}\|u_{xx}\|_{\infty})\|\partial_x^{2k+1}
u\|_2^2.
\end{equation}
To estimate the summand in \eqref{049}, we discuss several cases.

Case 1. $n=2k+1$. In view of the constraint, we have $(l,m)=(0,2)$
or $ (1,1)$. H\"older's inequality then gives that
\begin{align*}
|\int \partial_x^l u\partial_x^m u\partial_x^n u\partial_x^{2k+1} u
dx|\lsm
(\|u_x\|_{\infty}^2+\|u\|_{\infty}\|u_{xx}\|_{\infty})\|\partial_x^{2k+1}
u\|_2^2.
\end{align*}

Case 2. $n\le 2k$, $l=0$. Since $l+m+n=2k+3$ we must have $m\ge 3$.
We choose $2<p,q<\infty$ such that $\frac 1p+\frac 1q=\frac 12$ and
use H\"older to get
\begin{align*}
|\int \partial_x^l u\partial_x^m u\partial_x^n u\partial_x^{2k+1} u
dx|\lsm \|u\|_{\infty}\|\partial_x^m u\|_p\|\partial_x^n
u\|_q\|\partial_x^{2k+1} u\|_2.
\end{align*}
Using the interpolation inequality
$$
\|\partial_x^m u\|_p\le \|\partial_x^{2k+1}
u\|_2^{\theta}\|u\|_{\dot H^{\frac 52}}^{1-\theta},\
\theta=\frac{m-2-\frac 1p}{2k-\frac 32},
$$
we have
\begin{align*}
|\int \partial_x^l u\partial_x^m u\partial_x^n u\partial_x^{2k+1} u
dx|&\le \|u\|_{\infty}\|\partial_x^{2k+1} u\|_2^2\|u\|_{\dot
H^{\frac
52}}\\
&\le \|u\|_{\infty}\|u\|_{H^3}\|\partial_x^{2k+1} u\|_2^2.
\end{align*}

Case 3. $n\le 2k$, $l=1$. In this case $m,n\ge 2$, we can choose
$2<p,q<\infty $ such that $\frac 1p+\frac 1q=\frac 12$ and use the
interpolation inequality
\begin{equation}\label{inter}
\|\partial_x^m u\|_p\le \|\partial_x^{2k+1} u\|_2^{\frac{m-1-\frac
1p}{2k-\frac 12}}\|u\|_{\dot H^{\frac 32}}^{1-\frac{m-1-\frac
1p}{2k-\frac 12}}
\end{equation}
 to get
\begin{align*}
|\int \partial_x^l u\partial_x^m u\partial_x^n u\partial_x^{2k+1} u
dx|&\le \|u_x\|_{\infty}\|\partial_x^m u\|_p\|\partial_x ^n
u\|_q\|\partial_x^{2k+1} u\|_2\\
&\le \|u_x\|_{\infty}\|\partial_x^{2k+1} u\|_2^2\|u\|_{\dot
H^{\frac 32}}\\
&\le \|u_x\|_{\infty}\|u\|_{H^2}\|\partial_x^{2k+1} u\|_2^2.
\end{align*}
Case 4. $n\le 2k$, $l,m,n\ge 2$. Choosing $2<p,q,r<\infty$ such that
$\frac 1p+\frac 1q+\frac 1r=\frac 12$ and using the interpolation
inequality \eqref{inter}, we get
\begin{align*}
|\int \partial_x^l u\partial_x^m u\partial_x^n u\partial_x^{2k+1} u
dx|&\le \|\partial_x^l u\|_p\|\partial_x^m u\|_q\|\partial_x^n
u\|_r\|\partial_x^{2k+1} u\|_2\\
&\le \|u\|_{\dot H^{\frac 32}}^2 \|\partial_x^{2k+1} u\|_2^2\\
&\le \|u\|_{H^2}^2\|\partial_x^{2k+1} u\|_2^2.
\end{align*}
Collecting all the estimates together, we conclude
\begin{align*}
I\le &(\|\partial_x u\|_{\infty}^2
+\|u\|_{\infty}\|u_{xx}\|_{\infty}
+\|u\|_{\infty}\|u\|_{H^3}\\
&+\|u_x\|_{\infty}\|u\|_{H^2}+\|u\|_{H^2}^2)\|\partial_x^{2k+1}
u\|_2^2.
\end{align*}

For term $II$, we simply have
\begin{align*}
 II = -\epsilon \int (\partial_x^{2k+2} u)^2 dx \le 0.
\end{align*}
For III, we compute:
\begin{align*}
 III & = - \int \partial_x^{2k+2} (K*\partial_x u) u \partial_x^{2k+1} u dx - \int K*\partial_x u \partial_x^{2k+2} u \partial_x^{2k+1} udx \\
&\quad + \sum_{\substack{l+m= 2k+2 \\l,m\le 2k+1}} C_{l,m}\int \partial_x^l(K*\partial_x u) \partial_x^m u \partial_x^{2k+1} u dx \\
& = - \int \partial_{xx} K* \partial_x^{2k+1} u u \partial_x^{2k+1} u dx + \frac 12 \int \partial_x K*\partial_x u (\partial_x^{2k+1} u)^2 dx \\
&\quad + \sum_{\substack{l+m= 2k+2 \\l,m\le 2k+1}}C_{l,m}
 \int \partial_x K* \partial_x^l u \partial_x^m u \partial_x^{2k+1} u dx \\
& =2\int u(\partial_x^{2k+1} u)^2 dx -\int (e^{-|x|}*
\partial_x^{2k+1} u)
 \partial_x^{2k+1} u u dx +\frac 12\int(\partial_x K*\partial_x u)(\partial_x^{2k+1} u)^2 dx\\
&\quad + \sum_{\substack{l+m\le 2k+2 \\l,m\le 2k+1}}
C_{l,m}\int \partial_x K* \partial_x^l u \partial_x^m u \partial_x^{2k+1} u dx \\
& \lsm (\| u\|_\infty +\|\partial_x u\|_{2})\|
\partial_x^{2k+1} u\|_2^2 \\
&\qquad+ \sum_{\substack{l+m=2k+2 \\l,m\le 2k+1}}C_{l,m} \|
\partial_x^{2k+1} u\|_2 \| \partial_x^l u\|_2
\| \partial_x^m u\|_2 .\\
\end{align*}
Using the interpolation inequality
$$
\|\partial_x^l u\|_2\le \|\partial_x^{2k+1}
u\|_2^{\frac{l-1}{2k}}\|u_x\|_2^{1-\frac{l-1}{2k}},
$$
we finally get
$$
III\lsm (\|u\|_{\infty}+\|u_x\|_{2})\|\partial_x^{2k+1} u\|_2^2.
$$
Summarizing the estimates of $I,II,III$ we obtain
\begin{align*}
\frac d{dt}\|\partial_x^{2k+1} u(t)\|_2^2& \lsm
\|\partial_x^{2k+1}u\|_2^2(\|u_x\|_{\infty}^2+\|u\|_{\infty}\|u_{xx}\|_{\infty}
\\
&\qquad+\|u\|_{\infty}\|u\|_{H^3}+\|u_x\|_{\infty}\|u\|_{H^2}+\|u\|_{\infty}+\|u_x\|_{2})\\
&\lsm \|\partial_x^{2k+1} u\|_2^2(\|u\|_{H^3}+\|u\|_{H^3}^2).
\end{align*}
Using Gronwall and \eqref{ru}, \eqref{ruxx}, \eqref{three grad} we
get
\begin{align*}
 \sup_{0\le t\le T_0} \| \partial^{2k+1}_x u(t) \|_2 \lsm e^{CT_0} \| \partial^{2k+1}_x
 u_0\|_2,
\end{align*}
for sufficiently small $T_0=T_0(\|u_0\|_2,\|\partial_x u_0\|_2)$. 

This concludes the estimate of the $H_x^{2k+1}$-norm of $u^\epsilon$.
Finally if $u_0\ge 0$, then by the weak maximum principle we have $u^\epsilon(t)
\ge 0$ for any $0\le t\le T_0$.  The lemma is proved.
\end{proof}
We are now ready to complete the
\begin{proof}[Proof of Proposition \ref{prop1}]
This follows directly from Lemma \ref{lem_local_exist} and \ref{lem301}. In particular note that
in Lemma \ref{lem301} the time of existence of the local solution does not depend on $\epsilon$.
\end{proof}

\begin{proof}[Proof of Corollary \ref{cor1}]
Assume $u_0 \in \bigcap_{m=0}^\infty H_x^m(\R) \bigcap L_x^p(\R)$
for some $1\le p<2$.
Then using \eqref{eq_WSR_1} and Duhamel's formula, 
we can write $u^\epsilon(t)$ as
\begin{align*}
 u^\epsilon(t) &= e^{\epsilon t\partial_{xx}} u_0 + \int_0^t e^{\epsilon (t-s) \partial_{xx}}  
\left( \frac 13 \partial_{xx} ((u^\epsilon)^3) \right)(s) \\
 &\qquad - \int_0^t e^{\epsilon(t-s) \partial_{xx} } \partial_x ( u^\epsilon \,\partial_x K*u^\epsilon )(s)  ds.
\end{align*}
Since $\|e^{\epsilon t\partial_{xx}} f\|_p \lsm \|f \|_p$ for any $\epsilon>0$, we can estimate the $L^p_x$ norm of $u^\epsilon$ as
\begin{align*}
  \| u^\epsilon(t) \|_p &\lsm \| u_0\|_p +
 \int_0^t \left \| \frac 13 \partial_{xx}((u^\epsilon)^3) - \partial_x( (\partial_x K*u^\epsilon) u^\epsilon) \right\|_p ds \\
 & \lsm \| u_0\|_p + \int_0^t \| u^\epsilon(s) \|_{W^{2,3p}}^3+ \| u^\epsilon(s) \|_{W^{1,2p}}^2 ds \\
 & \lsm \| u_0\|_p + \int_0^t \| u^\epsilon(s)\|_{H^3}^3 + \|u^\epsilon(s) \|_{H^2}^2  ds,
\end{align*}
where the last inequality follows from the Sobolev embedding. This
estimate shows that $u^\epsilon(t) \in L_x^p$ for any $t$. The continuity (including right continuity at $t=0$) follows from similar estimates.
 We omit the details. Finally since for $1\le p<2$, $\|u_0\|_2 + \|\partial_x u_0\|_\infty \lsm \|u_0\|_p + \| \partial_x u_0\|_\infty$, one
 can choose the time interval sufficiently small depending only on $\|u_0\|_p + \| \partial_x u_0\|_\infty$.
\end{proof}


\subsection{Proof of Proposition \ref{prop2}}
In the case $u_0 \ge 0$, we shall show the regularized equation has
a global solution. The key point here is that by using
the positivity, we can obtain the apriori boundedness of the $L_x^p$-norm ($2\le p\le \infty$) of the solution 
on any finite time interval. To control the $L_x^{\infty}$-norm of the gradient, we need the following
lemma from \cite{Lie96}.

\begin{lem} \label{lemM}
Let $\Omega = (0,T) \times \R$ and $u \in C^{2,1}(\Omega) \cap C( \bar \Omega) $ be a solution
to the parabolic equation:
\begin{align*}
 \begin{cases}
  \partial_t u = b(t,x,u,u_x) u_{xx} +a(t,x,u,u_x), \quad (t,x) \in (0,T) \times \R \\
 u|_{t=0} = u_0
 \end{cases}
\end{align*}
with the following properties:
\begin{enumerate}
 \item Uniform parabolicity: there exists constants $\beta_0$, $\beta_1>0$ such that
\begin{align*}
 \beta_0 b(t,x,z,p) p^2 \ge |a(t,x,z,p)|,\quad\forall\, |p|\ge \beta_1.
\end{align*}

\item There is a constant $M>0$ such that
\begin{align*}
 |u(t,x) - u(t,y) | \le M,\quad \forall\, (t,x), (t,y) \in \Omega.
\end{align*}

\item There exists $L_0>0$ such that
\begin{align*}
 |u_0(x) -u_0(y) | \le L_0 |x-y|, \quad \forall\, x,y\in \R.
\end{align*}
\end{enumerate}
Under all the above assumptions, we have the following bound:
\begin{align*}
 \sup_{0\le t\le T} \| u_x(t) \|_\infty \le 2 (L_0+\beta_1) e^{\beta_0 M}.
\end{align*}

\end{lem}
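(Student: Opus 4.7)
The plan is to establish this classical gradient bound by a Bernstein-type argument organized around an exponential auxiliary function that records the oscillation bound $M$. The target estimate $2(L_0+\beta_1)e^{\beta_0 M}$ has the characteristic exponential shape in $M$, which suggests comparing $u_x^2$ against a barrier of the form $K\, e^{2\beta_0(u-u_*)}$ with $u_*=\inf_\Omega u$ and $K=4(L_0+\beta_1)^2$. After shifting $u$ so that $u_*=0$, I would introduce
\begin{align*}
 W(t,x) = u_x(t,x)^2 - K\, e^{2\beta_0 u(t,x)},
\end{align*}
which at $t=0$ satisfies $W(0,\cdot)\le L_0^2-K\le 0$ by hypothesis~(3). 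The entire argument then reduces to propagating $W\le 0$ to all of $\Omega$: combined with hypothesis~(2), one has $e^{2\beta_0 u}\le e^{2\beta_0 M}$, and this forces $|u_x|\le \sqrt{K}\, e^{\beta_0 M}=2(L_0+\beta_1)e^{\beta_0 M}$.

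The core computation is the application of the linearized operator $\partial_t-b\partial_{xx}$ to $W$. Differentiating the PDE once in $x$ gives an evolution equation for $u_x$, and combining with the chain rule yields schematically
\begin{align*}
 (\partial_t-b\partial_{xx})W \le -2b\, u_{xx}^2 - 2\beta_0 K\, e^{2\beta_0 u}\, a - 4\beta_0^2 K\, b\, e^{2\beta_0 u}\, u_x^2 + \mathcal{E},
\end{align*}
where $\mathcal{E}$ collects the cross terms that are cubic in $(u_x,u_{xx})$ with coefficients depending on partial derivatives of $a$ and $b$. On $\{|u_x|\le \beta_1\}$ the desired bound is automatic. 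On $\{|u_x|\ge \beta_1\}$, the uniform parabolicity hypothesis $\beta_0 b u_x^2\ge |a|$ converts $-2\beta_0 K e^{2\beta_0 u}a$ together with $-4\beta_0^2 K b e^{2\beta_0 u} u_x^2$ into a strictly negative contribution of order $\beta_0^2 K b\, e^{2\beta_0 u}\, u_x^2$, and the choice $K=4(L_0+\beta_1)^2$ is exactly what makes this negative term strong enough to absorb $\mathcal{E}$ and the cross-term $-2b u_{xx}^2$ is there to absorb the pieces of $\mathcal{E}$ quadratic in $u_{xx}$. Hence $W$ is a subsolution of a linear parabolic operator and the weak maximum principle delivers $W\le 0$.

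The main obstacles are twofold. First, the coefficients $a$ and $b$ are not assumed differentiable in their last arguments, so the formal Bernstein computation has to be rigorously justified; this can be done either by smoothly mollifying $a$ and $b$ and passing to the limit, or more robustly via the De Giorgi--Stampacchia truncation scheme applied to the level sets $\{|u_x|>s\}$, which is precisely the approach developed in Chapter~XI of \cite{Lie96}. Second, $\Omega$ is an unbounded strip, so the weak maximum principle cannot be invoked directly on $W$; one must localize to $[0,T]\times[-R,R]$, controlling the lateral boundary values via hypothesis~(2) and the barrier $Ke^{2\beta_0 u}$, and then pass to the limit $R\to\infty$, if necessary adding a slowly decaying Phragm\'en--Lindel\"of perturbation to rule out escape of the maximum to infinity. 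Handling these two technical points carefully is the heart of the proof.
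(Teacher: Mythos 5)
First, a point of reference: the paper does not prove this lemma at all --- its ``proof'' is the single line ``See Lemma 11.16 of \cite{Lie96}'' --- so your attempt is necessarily a from-scratch reconstruction rather than a variant of an in-paper argument. Unfortunately the reconstruction has a genuine gap, and it is not the technical one you flag. The Bernstein computation requires differentiating the equation in $x$, which produces in your error term $\mathcal{E}$ contributions such as $2u_x\bigl(\partial_x a+\partial_z a\,u_x+\partial_p a\,u_{xx}\bigr)$ and $2u_xu_{xx}\bigl(\partial_x b+\partial_z b\,u_x+\partial_p b\,u_{xx}\bigr)$. The sole structure hypothesis, $\beta_0\,b\,p^2\ge|a|$ for $|p|\ge\beta_1$, controls the \emph{size} of $a$ relative to $b\,p^2$ and says nothing whatsoever about the partial derivatives of $a$ and $b$. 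Consequently no choice of $K$ depending only on $L_0$ and $\beta_1$ can make the good term $-4\beta_0^2Kb\,e^{2\beta_0 u}u_x^2$ absorb $\mathcal{E}$; and even if one \emph{added} bounds on those derivatives, the resulting constant would depend on them, which is incompatible with the claimed conclusion $2(L_0+\beta_1)e^{\beta_0 M}$. Mollifying $a$ and $b$ does not repair this, because the mollified derivatives remain uncontrolled, and a De Giorgi--Stampacchia truncation on the level sets of $|u_x|$ runs into the same obstruction (one still needs a Caccioppoli-type inequality for $u_x$, hence a differentiated equation). So this is a failure of the approach under the stated hypotheses, not merely a gap in rigor. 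A secondary issue: on the lateral boundary $\{|x|=R\}$ of your truncated domain, hypothesis (2) bounds the oscillation of $u$ but not $u_x$, so $W$ need not be $\le 0$ there, and the proposed limit $R\to\infty$ is not justified without an additional qualitative assumption on $u_x$ at infinity.

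The reason the result is true with only a size condition on $a$ is that the standard proof (the one behind Lieberman's Lemma 11.16, going back to Ladyzhenskaya--Ural'tseva) never differentiates the coefficients. One doubles the variables: for $x>y$ consider $w(t,x,y)=u(t,x)-u(t,y)-\phi(x-y)$ with $\phi$ concave, increasing, $\phi'\ge\beta_1$, and $\phi''+\beta_0(\phi')^2<0$ (e.g.\ $\phi(s)=\tfrac1{\beta_0}\log(1+C\beta_0 s)$). At an interior positive maximum of $w$ the first-order conditions give $u_x(t,x)=u_y(t,y)=\phi'(x-y)$, and subtracting the two copies of the PDE, using the second-order maximum conditions and the structure inequality at $p=\phi'$, yields $\phi''+\beta_0(\phi')^2\ge0$, a contradiction. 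Matching $\phi$ to the initial Lipschitz bound $L_0$ on short scales and to the oscillation bound $M$ on long scales forces $\phi'(0)\sim(L_0+\beta_1)e^{\beta_0 M}$, which is exactly where the exponential constant comes from. If you want to salvage your write-up, this doubling argument is the route to take; the exponential ansatz $Ke^{2\beta_0 u}$ reappears there only implicitly, through the ODE $\phi''=-\beta_0(\phi')^2$.
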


\begin{proof}
See Lemma 11.16 of \cite{Lie96}.
\end{proof}

Incorporating Lemma \ref{lemM} with the idea of the
proof of Lemma \ref{lem301}, we get the following global analogue of
Lemma \ref{lem301}.
\begin{lem} \label{lem301B}
 Let $u_0 \in H^{2k+1}_x(\R)$, $k\ge 2$ and $u_0 \ge 0$. Then for any $T>0$ and
for any $\epsilon>0$, \eqref{eq_WSR_1} has a unique solution $u^\epsilon \in B^k_{T}$. Also
$u^\epsilon(t)\ge 0$ for any $t\ge 0$.
\end{lem}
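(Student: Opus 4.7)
My plan is to combine the local existence given by Lemma \ref{lem301} with global-in-time a priori control of every Sobolev norm $\|u^\eps(t)\|_{H^{2k+1}}$, and then apply a standard continuation argument. The only new ingredient compared with Lemma \ref{lem301} is the sign hypothesis $u_0 \ge 0$, which via the weak maximum principle (the diffusion coefficient $(u^\eps)^2 + \eps \ge \eps$ is strictly positive once the solution is viewed as given) gives $u^\eps(t,x) \ge 0$ throughout the maximal existence interval of the smooth solution. This positivity is precisely what converts the local bounds of Lemma \ref{lem301} into global ones.

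The first step is to upgrade Step 1 of the proof of Lemma \ref{lem301} to a $p$-independent bound. Using the identity $\partial_{xx}K = -2\delta + e^{-|x|}$ exactly as in that step, the key observation is that when $u \ge 0$ the term $-\tfrac{2(p-1)}{p}\int u\cdot u^p\,dx = -\tfrac{2(p-1)}{p}\int u^{p+1}\,dx$ has the good sign and can be discarded, rather than bounded by $\|u\|_{p+1}^{p+1}$ as in the signed case. The remaining contribution is $\lesssim \tfrac{p-1}{p}\|u\|_2\|u\|_p^p$, and feeding in the already-known estimate $\|u^\eps(t)\|_2 \le \|u_0\|_2 e^{Ct}$ followed by Gronwall gives
\[
\|u^\eps(t)\|_p \le \|u_0\|_p\,\exp\bigl(Ct\|u_0\|_2 e^{Ct}\bigr), \quad 2\le p\le\infty,
\]
uniformly in $\eps$ and $p$, matching the nonnegative bound of Theorem \ref{thm2}. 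In particular $M_0 := \sup_{0\le t\le T}\|u^\eps(t)\|_\infty < \infty$ for every finite $T$.

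Next I would invoke Lemma \ref{lemM} to obtain a global $L_x^\infty$-bound for $\partial_x u^\eps$. Using $\partial_{xx}K = -2\delta + e^{-|x|}$, rewrite \eqref{eq_WSR_1} as $\partial_t u = b u_{xx} + a$ with
\[
b = u^2 + \eps, \qquad a = 2u u_x^2 - u_x(K*u_x) + 2u^2 - u(e^{-|x|}*u).
\]
To verify uniform parabolicity, note $|K*u_x| = |(\partial_x K)*u| \lesssim \|u\|_\infty \le M_0$, so writing $p=u_x$ one has $|a|\lesssim 2 M_0 p^2 + C(M_0)|p| + C(M_0)$; hence $\beta_0 b p^2 \ge |a|$ for $|p|\ge \beta_1$ is satisfied with $\beta_0,\beta_1$ depending only on $M_0$ and $\eps$. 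Condition (2) holds with $M=2M_0$, and condition (3) with $L_0=\|\partial_x u_0\|_\infty$. Lemma \ref{lemM} then furnishes a finite, though $T$- and $\eps$-dependent, bound on $\|\partial_x u^\eps(t)\|_\infty$ on $[0,T]$.

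With $\|u^\eps\|_\infty$ and $\|\partial_x u^\eps\|_\infty$ under control on $[0,T]$, the differential inequalities derived in Steps 3 and 4 of the proof of Lemma \ref{lem301} propagate unchanged: each higher Sobolev norm $\|\partial_x^j u^\eps(t)\|_2$ satisfies a Gronwall-type inequality whose coefficients are governed by already-bounded quantities, so by induction all norms up to $H^{2k+1}$ remain finite on $[0,T]$. Together with the local existence of Lemma \ref{lem301}, this excludes blow-up inside $[0,T]$ and produces a solution $u^\eps \in B^k_T$ for every $T>0$; uniqueness follows from Lemma \ref{lem_local_exist}. I expect the main point of care to be the uniform-parabolicity verification for Lemma \ref{lemM}, and specifically confirming that the nonlocal piece $u_x(K*u_x)$ in $a$ contributes only at order $|p|$, not $p^2$, so that it can be absorbed by $b p^2$ for $|p|$ large.
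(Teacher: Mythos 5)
Your proposal is correct and follows essentially the same route as the paper: positivity from the weak maximum principle, the $p$-independent $L^p_x$ bound obtained by discarding the good-sign term $-\tfrac{2(p-1)}{p}\int u^{p+1}$, the gradient bound via Lemma \ref{lemM} with exactly the same decomposition $b=u^2+\eps$, $a=2uu_x^2+2u^2-(\partial_xK*u_x)u-(\partial_xK*u)u_x$, and propagation of the higher Sobolev norms as in Lemma \ref{lem301}. No gaps.
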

\begin{proof}
The positivity of $u^\epsilon$ follows easily from the weak maximum
principle. By Lemma \ref{lem_local_exist} we only need to get an
apriori control of the $H^{2k+1}_x$-norm of $u$ on an arbitrary time
interval $[0,T]$.

\texttt{Step 1}: $L_x^p$-norm control for $2\le p\le +\infty$. By
the same estimates as in Step 1 of the proof of Lemma \ref{lem301} we have the
following a priori $L_x^2$-norm estimate:
\begin{align*}
 \|u(t) \|_2 \le \| u_0\|_2 \exp\left(\frac 14 \| \partial_x K\|_1^2 t \right).
\end{align*}
To obtain the $L_x^\infty$-norm estimate, we take $2<p<+\infty$ and compute
\begin{align*}
 \frac 1 p \frac d {dt} \| u(t) \|_p^p &\le - \frac {2(p-1)} p \int u |u|^p dx + \frac {p-1} p \int (e^{-|x|}*u) |u|^p dx
\end{align*}
Since $u\ge 0$ we can drop the first term and continue to estimate
\begin{align*}
 \frac 1 p \frac d {dt} \| u(t) \|_p^p \le C \|u_0\|_2 e^{Ct} \| u(t) \|_p^p,
\end{align*}
where $C$ is a generic constant independent of $p$. Gronwall then
implies that
\begin{align} \label{eq1220A}
 \| u(t) \|_p \le \|u_0\|_p \exp \left( Ct \|u_0\|_2 e^{Ct} \right).
\end{align}
Letting $p\to \infty$, we obtain
\begin{align} \label{eq1220B}
 \| u(t) \|_\infty \le \|u_0\|_\infty \exp \left( Ct \|u_0\|_2 e^{Ct} \right).
\end{align}
This is the $L_x^\infty$-norm estimate we needed.

\texttt{Step 2}: Control of $\| \partial_x u(t) \|_\infty$. We shall apply Lemma \ref{lemM} with
\begin{align*}
 & b(t,x,z,p) =z^2 +\epsilon, \\
& a(t,x,z,p) = 2zp^2 + 2 z^2 -c(t,x) z -d(t,x)p,
\end{align*}
where 
\begin{align*}
 c(t,x) = (\partial_x K*u_x)(t,x),\\
d(t,x) = (\partial_x K*u)(t,x).
\end{align*}
By step 1 we have
\begin{align*}
 |c(t,x)|+ |d(t,x)| &\le C \|u(t)\|_{\infty}\\
 & \le C \|u_0\|_{\infty} exp(CT\|u_0\|_2 e^{CT}).
\end{align*}
Also
\begin{align*}
 |u(t,x) -u(t,y)
 | &\le 2\|u(t)\|_\infty \le 2 \|u_0\|_\infty \exp\left( CT \|u_0\|_2 e^{CT} \right). \\
& =: M
\end{align*}
and
\begin{align*}
 |u_0(x) -u_0(y) | &\le \| \partial_x u_0\|_\infty |x-y| \\
&=: L_0|x-y|.
\end{align*}
Collecting all these estimates, we see that
\begin{align*}
 \sup_{0\le t\le T} \| \partial_x u(t) \|_\infty \le C(\|u_0\|_2, \|\partial_x u_0\|_\infty, \epsilon, T)
\end{align*}
This concludes the gradient estimate.

\texttt{Step 3}: Control of the higher derivatives. This part of the
estimates is exactly the same as the corresponding estimates in the
proof of Lemma \ref{lem301}. Note in particular that we do not need
$T$ to be small once we obtain the a priori control of $\|u\|_p$ and
$\|\partial_x u\|_\infty$ norms.
\end{proof}

Now we are ready to complete

\begin{proof}[Proof of Proposition \ref{prop2}]
This follows directly from the local existence Lemma
\ref{lem_local_exist} and the a priori estimate Lemma \ref{lem301B}.
If $u_0 \in L_x^p(\R)$ for some $1\le p<2$, then by repeating the
proof of Corollary \ref{cor1} we obtain $u\in C([0,T], L_x^p)$ for
any $T>0$. The proof is finished.
\end{proof}

\section{Proof of Theorem \ref{thm1}, \ref{thm2} and \ref{thm3}}

In this section we prove our main theorems. Our
solution is going to be the limit of the sequence of regularized
solutions $u^{\eps}$ which we constructed in the previous section.
To obtain uniform control of the Sobolev norms of the regularized solutions, we
have the following

\begin{lem}[A priori estimate] \label{lem_apriori}
Assume $u_0 \in H_x^{2k+1}(\R)$, $k\ge 2$. Let $T_0=T_0(\| u_0\|_2+
\| \partial_x u_0 \|_\infty)$ and $u^{\epsilon} \in B^k_{T_0}$ be
the corresponding unique solution to \eqref{eq_WSR_1} (see Lemma
\ref{lem301}). Then the set of functions
$(u^{\epsilon})_{0<\epsilon<1}$ satisfies the following uniform
estimates:
\begin{align*}
 & \sup_{0\le t\le T_0}\| u^\epsilon \|_{H^{2k+1}}
+\sqrt{\epsilon} \| u^{\epsilon} \|_{L_t^2 H^{2k+2}([0,T_0]\times \R)}
 \le C( \| u_0\|_{H^{2k+1}}, k), \\
& \| u^\epsilon \partial_x^{2k+2} u^\epsilon \|_{L_t^2 L_x^2([0,T_0]\times \R)} + \| \partial_t u^\epsilon\|_{L_t^2 H^{2k}([0,T_0]\times \R)}
\le C( \| u_0\|_{H^{2k+1}}, k),\\
& \| \partial_x (\partial_x K*u^{\epsilon} u^\epsilon)
 \|_{L_t^2 H^{2k}([0,T_0]\times \R)} \le C( \| u_0\|_{H^{2k+1}}, k),
\end{align*}
where $C=C(\| u_0\|_{H^{2k+1}}, k)$ is a positive constant.
\end{lem}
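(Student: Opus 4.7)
The plan is to extract the required $L_t^2$-bounds from energy identities by retaining the dissipative terms that were discarded in the proof of Lemma \ref{lem301}, and then to read off the bounds on $\partial_t u^\epsilon$ and on the nonlocal flux directly from the PDE. To begin, I would observe that the sup-in-time part $\sup_{0\le t\le T_0}\|u^\epsilon(t)\|_{H^{2k+1}}\le C(\|u_0\|_{H^{2k+1}},k)$ is already contained in Lemma \ref{lem301}, since $T_0$ there was shown to be independent of $\epsilon$. Moreover Steps 1--3 of that proof give uniform $L_t^\infty$-control of $\|u^\epsilon\|_\infty$, $\|\partial_x u^\epsilon\|_\infty$, $\|\partial_{xx}u^\epsilon\|_\infty$ and $\|u^\epsilon\|_{H^3}$, which I would treat as finite constants throughout.

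To capture $\sqrt{\epsilon}\,\|u^\epsilon\|_{L_t^2 H^{2k+2}}$ and $\|u^\epsilon\partial_x^{2k+2}u^\epsilon\|_{L_t^2 L_x^2}$, I would revisit the top-order identity
\[
\tfrac12\tfrac{d}{dt}\|\partial_x^{2k+1}u^\epsilon\|_2^2 = I + II + III
\]
from Step 4 of Lemma \ref{lem301} and track the dissipation that was previously thrown away. The term $II$ equals $-\epsilon\|\partial_x^{2k+2}u^\epsilon\|_2^2$ in its entirety, while the integration by parts inside $I=\tfrac13\int\partial_x^{2k+2}(u^3)\partial_x^{2k+2}u\,dx$ produces the leading negative term $-\int (u^\epsilon)^2(\partial_x^{2k+2}u^\epsilon)^2\,dx$ plus cross terms that were absorbed by Cauchy--Schwarz. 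Redoing those absorptions with a small weight $\delta>0$ leaves a definite positive fraction of both dissipations on the left-hand side. Integrating over $[0,T_0]$ and bounding the remaining lower-order contributions of $I$ and $III$ using the already-established $H^{2k+1}$-bound then yields both claimed $L_t^2$-norms.

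The bound on $\|\partial_x(\partial_x K*u^\epsilon\cdot u^\epsilon)\|_{L_t^2 H^{2k}}$ (in fact in $L_t^\infty H^{2k}$) is then immediate, since $H^{2k}$ is an algebra, $u^\epsilon$ is uniformly bounded in $L_t^\infty H^{2k+1}$, and for $K(x)=e^{-|x|}$ both $K$ and $\partial_x K$ lie in $L^1$, so convolution by either is bounded on $H^{2k}$. For $\|\partial_t u^\epsilon\|_{L_t^2 H^{2k}}$ I would use the equation \eqref{eq_WSR_1} directly: the linear piece $\epsilon\partial_{xx}u^\epsilon$ is controlled in $L_t^2 H^{2k}$ by $\sqrt{\epsilon}$ times the $\sqrt{\epsilon}\,L_t^2 H^{2k+2}$-bound; the degenerate piece $(u^\epsilon)^2\partial_{xx}u^\epsilon$ is handled by Leibniz expansion, bounding the top term $(u^\epsilon)^2\partial_x^{2k+2}u^\epsilon$ by $\|u^\epsilon\|_\infty\|u^\epsilon\partial_x^{2k+2}u^\epsilon\|_{L_x^2}$ and all lower terms by the Sobolev algebra property; the remaining $2u^\epsilon(\partial_x u^\epsilon)^2$ and $\partial_x(u^\epsilon K*\partial_x u^\epsilon)$ are handled exactly as above.

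The main obstacle is the bookkeeping inside the top-order identity: one must check that after splitting every cross term via weighted Cauchy--Schwarz, a strictly positive fraction of $\int (u^\epsilon)^2(\partial_x^{2k+2}u^\epsilon)^2\,dx$ and of $\epsilon\|\partial_x^{2k+2}u^\epsilon\|_2^2$ survives on the left, with a constant that is uniform in $\epsilon\in(0,1)$. Once this bookkeeping is done, everything else is either a verbatim recycling of an estimate already performed in Lemma \ref{lem301} or a short computation using Sobolev algebras and $L^1$-convolutions.
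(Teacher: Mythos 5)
Your proposal is correct and follows essentially the same route as the paper: the first three bounds are recovered by retaining the dissipative terms $-\int (u^\epsilon)^2(\partial_x^{2k+2}u^\epsilon)^2\,dx$ and $-\epsilon\|\partial_x^{2k+2}u^\epsilon\|_2^2$ in the Step 4 energy identity of Lemma \ref{lem301} and integrating in time, the $\partial_t u^\epsilon$ bound is read off from the equation using those estimates, and the nonlocal flux bound follows from the Sobolev algebra property and Young's inequality for the $L^1$ kernel. Your write-up in fact supplies more of the bookkeeping than the paper, which simply asserts these bounds "can be recovered from Step 4."
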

\begin{proof}
The estimates of $\|u^{\epsilon} \|_{L_t^\infty H^{2k+1}}$,
$\sqrt{\epsilon} \| u^{\epsilon} \|_{L_t^2H^{2k+2}}$ and $ \|
u^\epsilon \partial_x^{2k+2} u^\epsilon \|_{L_t^2 L_x^2}$ can be
recovered from step 4 of the proof of Lemma \ref{lem301}. To bound
$\| \partial_t u^\epsilon \|_{L_t^2 H^{2k}}$, we use
\eqref{eq_WSR_1} to obtain (here we again drop the superscript
$\eps$ for simplicity):
\begin{align*}
 \| \partial_t u \|_{H^{2k}} & \le \| \partial_t u
  \|_2 + \| \partial_x^{2k} \partial_t u \|_2 \\
& \le \| (u^2+\epsilon) u_{xx} + 2 u u_x^2 - \partial_x ( K*\partial_x u u) \|_2 + \| \partial_x^{2k} (u^2 u_{xx}) \|_2 \\
&\quad + \epsilon \| \partial_x^{2k+2} u \|_2 + 2 \| \partial_x^{2k} (u u_x^2) \|_2 + \| \partial_x^{2k+1}( K*\partial_x u u) \|_2 \\
& \le \| u\|_{H^2}^3+ {\| u\|_{H^2}^2} + \| u^2 \partial_x^{2k+2} u\|_2 + \| u\|_{H^{2k+1}}^3 \\
& \quad + \epsilon \| \partial_x^{2k+2} u\|_2 + \| u\|_{H^{2k+1}}^3 + \| u\|_{H^{2k+1}}^2.
\end{align*}
Integrating over $[0,T_0]$, we have
\begin{align*}
 \int_0^{T_0} \| \partial_t u(t)\|_{H^{2k}}^2 dt & \lsm T_0 \sup_{0\le t\le T_0} \| u\|_{H^{2k+1}}^2 (1+ \| u\|_{H^{2k+1}} ) \\
& \quad + \int_0^{T_0} \| u^2 \partial_x^{2k+2} u(t) \|_2^2 dt + \epsilon \int_0^T \| \partial_x^{2k+2} u \|_2^2 dt \\
& \le C(T_0,k, \|u_0\|_{H^{2k+1}}).
\end{align*}
Since $T_0=T_0(\|u_0\|_2+\|\partial_x u_0\|_\infty)$, we obtain the desired bounds. The last estimate is a simple
application of H\"older and Young's inequality using the given estimates.
\end{proof}

Now we define the set
\begin{align*}
 D^k_{T_0} := &\Bigl\{ v \in C([0,T_0]; H_x^{2k}) \cap L^\infty(0,T_0; H_x^{2k+1})):\, \Bigr.\\
&\Bigl. \partial_t v \in L^2(0,T_0; H^{2k})),\, v^2\partial_{xx}v
\in L^2(0,T_0; H^{2k}) \Bigr\}.
\end{align*}
We shall prove the existence and the uniqueness of the solution of
\eqref{eq1} in this set. This is
\begin{thm} \label{thm_local}
 Assume $u_0 \in H^{2k+1}(\R)$ with $k\ge 2$. Then there exists $T_0=T_0(\|u_0\|_2 + \|\partial_x u_0 \|_\infty)$ such that
\eqref{eq1} has a unique solution $u \in D^k_{T_0}$.
\end{thm}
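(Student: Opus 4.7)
The plan is to obtain the solution as a vanishing viscosity limit of the regularized solutions $u^{\epsilon}$ built in Proposition \ref{prop1}, using the $\epsilon$-uniform bounds of Lemma \ref{lem_apriori}, and then to establish uniqueness by a direct energy estimate on the difference of two candidate solutions.

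\textbf{Existence.} Given $u_0\in H^{2k+1}_x$, Proposition \ref{prop1} produces $u^{\epsilon}\in B^k_{T_0}$ solving \eqref{eq_WSR_1} on an interval $[0,T_0]$ with $T_0=T_0(\|u_0\|_2+\|\partial_x u_0\|_\infty)$ independent of $\epsilon$. Lemma \ref{lem_apriori} then gives uniform bounds
\begin{align*}
\|u^{\epsilon}\|_{L^\infty_t H^{2k+1}_x}+\sqrt{\epsilon}\,\|u^{\epsilon}\|_{L^2_t H^{2k+2}_x}+\|\partial_t u^{\epsilon}\|_{L^2_t H^{2k}_x}+\|u^{\epsilon}\partial_x^{2k+2}u^{\epsilon}\|_{L^2_t L^2_x}\le C.
\end{align*}
By the Aubin--Lions lemma applied to the triple $H^{2k+1}_x\hookrightarrow\hookrightarrow H^{2k}_{\mathrm{loc}}\hookrightarrow H^{2k}_x$ (using the bound on $\partial_t u^{\epsilon}$ in $L^2_tH^{2k}_x$), I extract a subsequence $u^{\epsilon_n}\to u$ strongly in $C([0,T_0];H^{2k}_{\mathrm{loc}})$ and weakly-$*$ in $L^\infty_tH^{2k+1}_x$, with $\partial_t u^{\epsilon_n}\rightharpoonup \partial_t u$ in $L^2_tH^{2k}_x$. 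Since $\epsilon_n\partial_{xx}u^{\epsilon_n}=\sqrt{\epsilon_n}\cdot(\sqrt{\epsilon_n}\,\partial_{xx}u^{\epsilon_n})\to 0$ strongly in $L^2_{t,x}$, and the nonlinear terms $(u^{\epsilon})^2\partial_{xx}u^{\epsilon}+2u^{\epsilon}(\partial_x u^{\epsilon})^2=\frac13\partial_{xx}((u^{\epsilon})^3)$ and $\partial_x(u^{\epsilon}K*\partial_x u^{\epsilon})$ pass to the limit by the strong $C_tH^{2k}_{\mathrm{loc}}$ convergence together with the uniform $H^{2k+1}_x$ bound, the limit $u$ solves \eqref{eq1}. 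The uniform estimate on $\|u^{\epsilon}\partial_x^{2k+2}u^{\epsilon}\|_{L^2_{t,x}}$, combined with lower semicontinuity, shows $u^2\partial_{xx}u\in L^2_tH^{2k}_x$ (after distributing derivatives and using the $H^{2k+1}_x$ control on the remaining factors), so $u\in D^k_{T_0}$. Continuity $u\in C([0,T_0];H^{2k}_x)$ follows from $u\in L^\infty_tH^{2k+1}_x$ together with $\partial_t u\in L^2_tH^{2k}_x$.

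\textbf{Uniqueness.} Let $u_1,u_2\in D^k_{T_0}$ be two solutions with the same initial data and set $w=u_1-u_2$. Since $u_1^3-u_2^3=(u_1^2+u_1u_2+u_2^2)w$, the difference satisfies
\begin{align*}
\partial_t w=\tfrac13\partial_{xx}\bigl((u_1^2+u_1u_2+u_2^2)w\bigr)-\partial_x(w\,\partial_xK*u_1)-\partial_x(u_2\,\partial_xK*w).
\end{align*}
Testing against $w$ and integrating by parts, the leading diffusive piece yields
\begin{align*}
-\tfrac13\int(u_1^2+u_1u_2+u_2^2)(\partial_x w)^2\,dx-\tfrac13\int\partial_x(u_1^2+u_1u_2+u_2^2)\,w\,\partial_x w\,dx.
\end{align*}
The first integrand is nonnegative (as $u_1^2+u_1u_2+u_2^2\ge\tfrac12(u_1^2+u_2^2)$) and I absorb the cross term using Cauchy--Schwarz, bounding $|\partial_x(u_1^2+u_1u_2+u_2^2)|\lesssim(|u_1|+|u_2|)(|\partial_x u_1|+|\partial_x u_2|)$ with the $L^\infty_x$-bounds on $\partial_x u_i$ coming from the $H^{2k+1}_x$-regularity via Sobolev embedding. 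For the first nonlocal term, integration by parts gives $-\tfrac12\int(\partial_{xx}K*u_1)w^2\,dx$, which is controlled since $\partial_{xx}K=-2\delta+e^{-|x|}$ and $u_1\in L^\infty_{t,x}$. For the second nonlocal term, I use $\|\partial_xK*w\|_2\le\|\partial_xK\|_1\|w\|_2$ and again Cauchy--Schwarz with a small parameter $\lambda$, absorbing $\lambda\int u_2^2(\partial_x w)^2\,dx$ into the good term. This produces $\tfrac{d}{dt}\|w(t)\|_2^2\le C\|w(t)\|_2^2$, and Gronwall forces $w\equiv 0$.

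\textbf{Main obstacle.} The delicate point is the uniqueness argument: because the diffusion is degenerate, one has no a priori coercive control of $\|\partial_x w\|_2$, so one must use the nonnegative quadratic form $\int(u_1^2+u_1u_2+u_2^2)(\partial_x w)^2\,dx$ as the sole source of regularity for $\partial_x w$ and verify that every off-diagonal term can be absorbed into it via a Cauchy--Schwarz split with an appropriate small constant. The compatibility of the nonlocal term $u_2\,\partial_xK*w$ with this absorption argument is the step that requires the most care.
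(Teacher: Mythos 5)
Your proposal is correct, and it rests on the same two pillars as the paper's argument: the $\epsilon$-uniform bounds of Lemma \ref{lem_apriori}, and the algebraic observation that $u_1^3-u_2^3=F(u_1,u_2)\,w$ with $F=u_1^2+u_1u_2+u_2^2\ge 0$, which makes the degenerate diffusion term sign-definite in an $L^2$ energy estimate for the difference $w$. Where you genuinely diverge is in the limiting procedure. You extract the solution by Aubin--Lions compactness, getting a subsequence converging in $C([0,T_0];H^{2k}_{\mathrm{loc}})$ and weakly-$*$ in $L^\infty_tH^{2k+1}_x$, and then prove uniqueness separately by the energy estimate on two solutions of \eqref{eq1}. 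The paper instead applies that very same energy estimate to two \emph{regularized} solutions $u^{\epsilon_1}$, $u^{\epsilon_2}$, obtaining the quantitative bound $\|u^{\epsilon_1}(t)-u^{\epsilon_2}(t)\|_2\lesssim(\epsilon_1+\epsilon_2)e^{Ct}$; this shows the whole family is Cauchy in $C([0,T_0];L^2_x)$ (no subsequence extraction, no locality caveat), and the interpolation $\|f\|_{H^{2k}}\lesssim\|f\|_2^{1/(2k+1)}\|f\|_{H^{2k+1}}^{2k/(2k+1)}$ upgrades this to convergence in $C([0,T_0];H^{2k}_x)$ globally on $\R$, after which passing to the limit in the nonlinearities is immediate; uniqueness is then the $\epsilon_1=\epsilon_2=0$ case of the same computation. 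Your route is more standard but has to work around the failure of compactness of $H^{2k+1}(\R)\hookrightarrow H^{2k}(\R)$ on the whole line (hence the $H^{2k}_{\mathrm{loc}}$ detour and the extra step recovering $u\in C([0,T_0];H^{2k}_x)$ from $\partial_t u\in L^2_tH^{2k}_x$); the paper's route buys an explicit convergence rate and a single estimate serving double duty. One small simplification available to you: in the diffusive cross term, $-\tfrac13\int\partial_xF\,w\,\partial_xw\,dx=\tfrac16\int\partial_{xx}F\,w^2\,dx$ after one more integration by parts, which is bounded directly by $\|w\|_2^2(\|u_1\|_{H^3}^2+\|u_2\|_{H^3}^2)$ with no absorption into the coercive term needed; your Cauchy--Schwarz split with $F\ge\tfrac12(u_1^2+u_2^2)$ also works but is more delicate than necessary.
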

\begin{proof}

\texttt{Step 1}. Contraction in $C([0,T],L_x^2)$. For any two
$\epsilon_1, \epsilon_2>0$, let $u=u^{\epsilon_1}$,
$v=u^{\epsilon_2}$ solve \eqref{eq_WSR_1} with the same initial data $u_0$.
Denote $w=u-v$, then for $w$ we have the equation
\begin{align*}
 \partial_t w &= \frac 13 (u^3-v^3)_{xx} + \epsilon_1 u_{xx} -\epsilon_2 v_{xx} - \partial_x (K*\partial_x w u) \\
&\qquad - \partial_x (K*\partial_x v w).
\end{align*}
Then
\begin{align*}
 \frac 12 \frac d {dt} \| w(t) \|_2^2 & = \frac 13 \int (u^3-v^3)_{xx} w dx + \int (\epsilon_1 u_{xx} -\epsilon_2 v_{xx}) wdx \\
&\quad -\int \partial_x(K*\partial_x w u) w dx  - \int \partial_x (K*\partial_x v w) wdx \\
& =: I +II+III+IV.
\end{align*}
Denote $F(u,v)=u^2+v^2+uv$. Clearly we always have $F(u,v)\ge 0$. We compute
\begin{align*}
 I & = \frac 13 \int \partial_{xx}(w F(u,v)) w dx \\
& = \frac 13 \int \partial_{xx} w w F(u,v) dx \\
& = - \frac 13 \int (\partial_x w)^2 F(u,v) dx + \frac 16 \int w^2  \partial_{xx} F(u,v) dx \\
& \le \frac 16 \int w^2 \partial_{xx} F(u,v) dx \\
& \lsm \| w\|_2^2 ( \| u\|_{H^3}^2 + \|v\|_{H^3}^2).
\end{align*}
For II we have the estimate
\begin{align*}
 II \le (\epsilon_1 \|u\|_{H^2} + \epsilon_2 \|v \|_{H^2}) \| w\|_{2}.
\end{align*}
The term III can be bounded as:
\begin{align*}
 III & = \int K*\partial_x w u w_x dx \\
&= -\int \partial_{xx}K*w uw dx -\int K*\partial_x w u_x w dx \\
& \le (\|u\|_\infty + \|\partial_x u\|_\infty) \| w\|_2^2.
\end{align*}
Similarly for IV we get
\begin{align*}
 IV & \lsm \| K*\partial_{xx} v\|_\infty \| w\|_2^2 \\
& \lsm \|v \|_{H^1} \|w\|_2^2.
\end{align*}
Finally we have by Lemma \ref{lem_apriori}
\begin{align*}
 \frac d {dt} \| w(t)\|_2^2 &\lsm \| w(t) \|_2^2 ( \| u\|_{H^3}^2 + \| v\|_{H^3}^2 + \|u\|_{H^3} + \| v\|_{H^3} ) \\
&\quad + (\epsilon_1 \| u\|_{H^2} + \epsilon_2 \| v\|_{H^2} ) \|w(t)\|_{2} \\
& \lsm C(\|u_0\|_{H^{2k+1}},k) \|w(t)\|_2^2 +
C(\|u_0\|_{H^{2k+1}},k) (\epsilon_1+\epsilon_2) \|w(t)\|_{2}
\end{align*}
Gronwall then gives
\begin{align*}
 \| u^{\epsilon_1}(t)-u^{\epsilon_2}(t) \|_2 &= \|w(t)\|_2 \\
& \le C(\|u_0\|_{H^{2k+1}},k) (\epsilon_1+\epsilon_2) \exp \left( C(\|u_0\|_{H^{2k+1}},k) t \right).
\end{align*}
This implies that $(u^\epsilon)_{0<\epsilon<1}$ has a limit as
$\epsilon\to 0$ in $C([0,T_0];\, L_x^2)$. Denote this limit as $u
\in C([0,T_0];\, L_x^2)$.

\texttt{Step 2}. We show that $u$ is our desired solution in
$D^k_{T_0}$. By Lemma \ref{lem_apriori} and the interpolation
inequality 
\begin{align*}
 \| f\|_{H^{2k}} \lsm \|f\|_2^{\frac 1 {2k+1}} \| f\|_{H^{2k+1}}^{\frac {2k}{2k+1}},
\end{align*}
we obtain
\begin{align*}
 u^{\epsilon} \to u \quad \text{in}\; C([0,T_0];\, H_x^{2k}), \quad \text{as $\epsilon\to 0$}.
\end{align*}
It then follows easily that
\begin{align*}
 & (u^\epsilon)^2 \partial_{xx} u^{\epsilon} \to u^2 \partial_{xx} u, \quad
 \text{in $C(0,T_0; L_x^2)$},\\
&  u^\epsilon (\partial_x u^{\epsilon})^2 \to u (\partial_x u)^2, \quad \text{in $C(0,T_0;
L_x^2)$},\\
& \partial_x(K*\partial_x u^{\epsilon} u^\epsilon ) \to \partial_x
(K*\partial_x u u), \quad \text{in $C(0,T_0; L_x^2)$}.
\end{align*}
By Lemma \ref{lem_apriori}, the functions $(u^\epsilon)^2 \partial_{xx}
u^\epsilon$, $u^\epsilon (\partial_x u^{\epsilon})^2$, $\partial_x
(K*\partial_x u^\epsilon u^\epsilon)$ are uniformly bounded in
$L^2([0,T_0], H_x^{2k})$. We conclude that $u^2 \partial_{xx} u$, $u
(\partial_x u)^2$, $\partial_x (K*\partial_x u u) \in
L^2([0,T_0],H_x^{2k})$. Now since $u^\epsilon$ is also uniformly
bounded in $L^\infty([0,T_0],H_x^{2k+1})$, we obtain $u \in
L^\infty([0,T_0],H_x^{2k+1})$. Similarly by Lemma \ref{lem_apriori}, the
set of functions $ (\partial_t u^{\epsilon})_{0<\epsilon<1}$ is uniformly bounded in $L^2([0,T_0],
H_x^{2k})$. By extracting a subsequence if necessary, we obtain for
some $\epsilon_n \to 0$,
\begin{align*}
 \partial_t u^{\epsilon_n} \rightharpoonup \partial_tu, \quad \text{in $L^2([0,T_0],H_x^{2k})$}.
\end{align*}
It follows that $\partial_t u \in  L^2([0,T_0],H_x^{2k}) $. Finally
by using Lemma \ref{lem_apriori} again we have
\begin{align*}
 \epsilon_n \partial_{xx} u^{\epsilon_n} \to 0, \quad \text{in $L^2(0,T_0;H_x^{2k})$}.
\end{align*}
This shows that $u$ is our desired solution in $D^k_{T_0}$. The theorem is proved.
\end{proof}


\begin{cor}[Continuation of solutions] \label{cor_cont}
Let $u_0 \in H_x^{2k+1}(\R)$ with $k\ge 2$. Then there exists a
unique solution $u$ of \eqref{eq1} with maximal-lifespan $T^*$ such
that only one of the following possibilities occur:
\begin{enumerate}
 \item $T^*=\infty$ and $u \in D^k_T$ for any $T>0$.
 \item $T^*<\infty$, $u\in D^k_T$ for any $T<T^*$, and
\begin{align*}
 \lim_{t\to T^*} \| \partial_x u(t) \|_\infty =\infty.
\end{align*}
\end{enumerate}
For any $2\le p<\infty$, there exists a generic constant $C$ such that
\begin{align*}
 \| u(t) \|_p \le \|u_0 \|_p e^{Cpt}, \quad \forall\, t\ge 0.
\end{align*}
If in addition $u_0\ge 0$, then $u(t)\ge 0$ and we also have the $p$-independent estimate for
all $2\le p\le +\infty$:
\begin{align} \label{eq1219}
 \| u(t) \|_p \le \|u_0\|_p \exp \left ( Ct \|u_0\|_2 e^{Ct} \right), \quad\forall\, t\ge 0.
\end{align}
In particular if $p=+\infty$, then
\begin{align*}
 \| u(t) \|_{\infty} \le \|u_0\|_\infty \exp\left( Ct \|u_0\|_2 e^{Ct} \right), \quad\forall\, t\ge 0.
\end{align*}
\end{cor}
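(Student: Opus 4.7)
The plan is to bootstrap Theorem \ref{thm_local} into a maximal-lifespan solution and then read off the blowup alternative from the explicit form of the local existence time $T_0 = T_0(\|u_0\|_2 + \|\partial_x u_0\|_\infty)$. First I would iterate: starting from $u_0 \in H_x^{2k+1}$, obtain a solution on $[0, T_0]$ in $D^k_{T_0}$; verify that $u(T_0) \in H_x^{2k+1}$ (this needs a brief argument, since the definition of $D^k_{T_0}$ only gives $C([0,T_0]; H_x^{2k}) \cap L^\infty(0,T_0; H_x^{2k+1})$, but weak-$*$ continuity in $H_x^{2k+1}$ suffices to locate $u(T_0)$ in $H_x^{2k+1}$); restart from $u(T_0)$ to extend. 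Define
\begin{equation*}
T^* := \sup\bigl\{T > 0 : \text{a solution of \eqref{eq1} in } D^k_T \text{ exists on } [0,T]\bigr\},
\end{equation*}
and glue the successive pieces together using the $L^2$-uniqueness established in Step 1 of the proof of Theorem \ref{thm_local}.

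To establish the blowup alternative when $T^* < \infty$: if $\|u(t)\|_2 + \|\partial_x u(t)\|_\infty$ remained uniformly bounded on $[0, T^*)$, the local existence time available at every restart would be bounded below by a positive constant, contradicting finiteness of $T^*$. Hence $\|u(t_n)\|_2 + \|\partial_x u(t_n)\|_\infty \to \infty$ along some sequence $t_n \to T^*$. The $L_x^2$-norm is controlled by the $L_x^p$ estimate with $p = 2$ proved below, which stays finite at $T^*$, so the explosion must occur in $\|\partial_x u(t)\|_\infty$.

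For the $L_x^p$ estimates I would simply pass to the limit in the uniform-in-$\epsilon$ bounds already established for the regularized solutions. Step 1 of the proof of Lemma \ref{lem301} gives
\begin{equation*}
\|u^\epsilon(t)\|_p \le \|u_0\|_p \exp\!\left(\tfrac{p-1}{4}\|\partial_x K\|_1^2\, t\right), \qquad 2 \le p < \infty,
\end{equation*}
uniformly in $\epsilon$. Since $u^\epsilon \to u$ in $C([0,T_0]; H_x^{2k})$ by Theorem \ref{thm_local}, Sobolev embedding yields pointwise-in-time convergence in $L_x^p$ for every $2 \le p < \infty$, so the bound descends to $u$. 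On each successive continuation interval the same argument applies, and concatenating gives the global bound $\|u(t)\|_p \le \|u_0\|_p e^{Cpt}$ on $[0, T^*)$. In the nonnegative case, Step 1 of the proof of Lemma \ref{lem301B} yields the stronger $p$-independent bound for $u^\epsilon$, which passes to the limit in the same way; the $p = \infty$ case follows either by letting $p \to \infty$ in the finite-$p$ estimate or by directly transferring the $L_x^\infty$ bound for $u^\epsilon$. Positivity of $u$ is immediate from the uniform convergence of the nonnegative sequence $u^\epsilon$.

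The main obstacle I expect is the bookkeeping in the continuation step: verifying that $u(T_0) \in H_x^{2k+1}$ at each restart with a quantitative control that lets the iteration extend the solution by a uniformly positive amount whenever $\|u\|_2 + \|\partial_x u\|_\infty$ remains bounded. Once this is in hand, the blowup alternative is immediate from the explicit dependence of $T_0$ on the initial data, and the $L_x^p$ propagation estimates are direct transcriptions of the $\epsilon$-uniform bounds already proved in Lemmas \ref{lem301} and \ref{lem301B}.
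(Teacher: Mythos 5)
Your proposal is correct and follows essentially the same route as the paper: the paper's own proof likewise continues the local solution of Theorem \ref{thm_local} using the fact that $T_0$ depends only on $\|u_0\|_2 + \|\partial_x u_0\|_\infty$, transfers the $\epsilon$-uniform $L^p_x$ bounds \eqref{eq948}, \eqref{eq1220A}, \eqref{eq1220B} to the limit via the convergence $u^\epsilon \to u$ in $C([0,T_0];H_x^{2k})$, and concludes that the solution persists as long as $\|\partial_x u(t)\|_\infty$ stays finite. Your version merely spells out the restart bookkeeping (locating $u(T_0)$ in $H_x^{2k+1}$, gluing via $L^2$-uniqueness) that the paper leaves implicit.
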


\begin{proof}
This follows directly from Theorem \ref{thm_local} and Lemma
\ref{lem301} (see \eqref{eq948} for the growth estimate of 
$L^p_x$-norm for any $T>0$). Since $u^\epsilon \to u$ uniformly in $C([0,T];
H_x^{2k})$, $u(t)$ satisfies the same estimate \eqref{eq948}.
Similarly if $u_0\ge 0$, then the estimate \eqref{eq1219} follows
from the corresponding estimate for $u^\epsilon$ (see
\eqref{eq1220A} and \eqref{eq1220B} in the proof of Lemma
\ref{lem301B}). Now that the $L^2$ norm of the constructed local
solution remains finite for any $T>0$, it can be continued as long
as $\| \partial_x u(t) \|_\infty$ does not blow up.
\end{proof}

We are now ready to complete the
\begin{proof}[Proof of Theorem \ref{thm1}]
 By Theorem \ref{thm_local}, $u_0 \in \bigcap_{m=0}^\infty H_x^m (\R)$
 implies that $u \in \bigcap_{k=2}^\infty D^k_{T_0}$.
Therefore $u\in C([0,T_0], H_x^m)$ for any $m\ge 0$.  It also
follows that $\partial_t u \in \bigcap_{m=1}^\infty
L^2(0,T_0;H^m_x)$. By using \eqref{eq1}, a standard bootstrap
argument implies that $u\in C^\infty([0,T_0]\times \R)$. If $u_0\ge
0$, then the positivity of $u$ follows from Lemma \ref{lem301B} and
the uniform convergence of $u^\epsilon$ to $u$. The theorem is
proved.
\end{proof}

\begin{proof}[Proof of Theorem \ref{thm2}]
This follows immediately from Theorem \ref{thm1} and Corollary \ref{cor_cont}.
\end{proof}

\begin{proof}[Proof of Theorem \ref{thm3}]
This is only slightly different from the proof of Corollary \ref{cor1}.
Assume $u_0 \in \bigcap_{m=0}^\infty H_x^m(\R) \bigcap L_x^p(\R)$
for some $1\le p<2$.
Then using \eqref{eq1}, we can bound
\begin{align*}
 \| u(t) \|_p &\lsm \| u_0\|_p + \int_0^t \left \|  \frac 13 \partial_{xx}(u^3) - \partial_x( (\partial_x K*u) u) \right\|_p ds \\
& \lsm \| u_0\|_p + \int_0^t \| u(s) \|_{W^{2,3p}}^3+ \| u(s) \|_{W^{1,2p}}^2 ds \\
& \lsm \| u_0\|_p + \int_0^t \| u(s)\|_{H^3}^3 + \|u(s) \|_{H^2}^2  ds.
 \end{align*}
 This shows that $u(t) \in L_x^p$ for any $t$. The continuity (including right continuity at $t=0$) follows from similar estimates.
We omit the standard details.
Finally in the case $u_0\ge 0$, $u_0 \in L_x^1(\R)$, the $L^1_x$ preservation follows from direct integration.
\end{proof}

\section{Proof of Theorem \ref{thm4}}
 We argue by contradiction. Let $u_0 \in
C_c^\infty(\R)$, $u_0\ge 0$ and assume that the corresponding
solution $u$ is global. By Theorem \ref{thm1}, \ref{thm2},
\ref{thm3}, $u \in C([0,T), H_x^m)$ for any $m\ge 0$ and the
$L_x^1$-norm of $u$ is preserved. Our intuition of proving the
blowup is based on the observation that as time goes on, the boundary
of the solution(which in $1$D case consists of two points) will move face to
face at a certain speed which has a lower bound independent of time.
This clearly will lead to the collapse of the solution. To realize this intuition, we will carry out a detailed
analysis on the characterstic lines of the solution which satisfy
\begin{align} \label{eqXta}
 \begin{cases}
  \dfrac d {dt} X(t, \alpha) =  (K*\partial_x u) (X(t,\alpha),t), \\
X(0,\alpha) = \alpha \in \R.
 \end{cases}
\end{align}
By standard ODE theory and our assumption that $u$ is a smooth global solution, $X(t)$ is
well defined and smooth for all time. Moreover, we have the
following lemma.
\begin{lem}[Properties of $X(t,\alpha)$] \label{lem_Xta}
The characterstic lines $X(t,\alpha)$ and the solution $u(t,x)$
satisfy the following properties:
\begin{enumerate}
 \item For any $t\ge 0$, $X(t,\cdot):\R\to\R$ is a $C^\infty$ diffeomorphism.
 \item $X(t,\cdot)$ is an order-preserving map in the sense that if $\alpha_1<\alpha_2$, then $X(t,\alpha_1)<X(t,\alpha_2)$ for any $t\ge 0$.
 \item $X(t,\cdot)$ maps intervals to intervals. More precisely, for any $\alpha_1<\alpha_2$, denote
          $$X(t,\cdot)([\alpha_1,\alpha_2])=\left\{ X(t,\alpha): \alpha_1 \le \alpha \le \alpha_2 \right\}.$$
       Then $X(t,\cdot)([\alpha_1,\alpha_2]) = [X(t,\alpha_1], X(t,\alpha_2)]$.
 \item If $u_0(\alpha)=0$, then $u(X(t,\alpha),t)=0$ for any $t\ge 0$.
 \item If $supp(u_0(\cdot)) \subset [-L,L]$ for some $L>0$, then
\begin{align*}
supp(u(t,\cdot)) \subset [X(t,-L),X(t,L)] \subset [-L,L]
\end{align*}
for all $t\ge 0$.
\end{enumerate}
\end{lem}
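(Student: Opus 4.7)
The plan is to treat \eqref{eqXta} as a standard ODE on $\R$ driven by the velocity field $v(t,x):=(K*\partial_x u)(t,x)=(\partial_x K*u)(t,x)$, then exploit the nonconservative form of \eqref{eq1} along characteristics together with the sign of $v$ at the edge of $\mathrm{supp}(u(t,\cdot))$. Since $u\in C([0,T];H_x^m)$ for every $m$ by Theorem \ref{thm1}, $v$ is smooth in $x$ with $\|v(t,\cdot)\|_\infty\lesssim\|\partial_x K\|_\infty\|u(t)\|_1$ and all $x$-derivatives of $v$ uniformly bounded on $[0,T]\times\R$; standard ODE theory then gives that $X(t,\alpha)$ is globally defined and $C^\infty$ in $(t,\alpha)$. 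Differentiating \eqref{eqXta} in $\alpha$ yields
\[
\partial_\alpha X(t,\alpha)=\exp\!\left(\int_0^t(\partial_x v)(s,X(s,\alpha))\,ds\right)>0,
\]
so $X(t,\cdot)$ is strictly increasing, which is (2); surjectivity follows from $|X(t,\alpha)-\alpha|\le t\,\|v\|_{L^\infty_{t,x}}$, which forces $X(t,\alpha)\to\pm\infty$ as $\alpha\to\pm\infty$, giving (1); and (3) is then immediate from strict monotonicity and continuity via the intermediate value theorem.

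For (4), I set $\phi(t):=u(X(t,\alpha),t)$ and rewrite \eqref{eq1} in nonconservative form:
\[
\partial_t u+v\,\partial_x u=-u\,\partial_x v+u^2 u_{xx}+2u\,u_x^2,
\]
so every term on the right-hand side carries a factor of $u$. Restricting to the characteristic produces a scalar ODE
\[
\phi'(t)=a(t)\,\phi(t)+b(t)\,\phi(t)^2,
\]
where $a(t)=(-\partial_x v+2u_x^2)|_{(X(t,\alpha),t)}$ and $b(t)=u_{xx}|_{(X(t,\alpha),t)}$ are continuous and bounded on $[0,T]$. Uniqueness of solutions for this Riccati-type ODE forces $\phi\equiv 0$ whenever $\phi(0)=u_0(\alpha)=0$.

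For (5), the first inclusion follows by combining (3) and (4): since $u_0$ vanishes off $[-L,L]$, property (4) kills $u$ along each characteristic issuing from $|\alpha|>L$, and by (3) these characteristics cover $\R\setminus[X(t,-L),X(t,L)]$. The second inclusion is the main geometric point and uses the specific kernel $K(x)=e^{-|x|}$: at the right edge $x_0=X(t,L)$, the first inclusion gives $\mathrm{supp}(u(t,\cdot))\subset(-\infty,x_0]$, and $u\ge 0$ by Theorem \ref{thm1}, hence
\[
v(t,x_0)=-\int_\R \mathrm{sgn}(x_0-y)\,e^{-|x_0-y|}u(t,y)\,dy\le 0,
\]
so $\tfrac{d}{dt}X(t,L)\le 0$ and $X(t,L)\le L$; the symmetric argument gives $X(t,-L)\ge -L$. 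The main obstacle is (4): one must carefully expand the nonlinear diffusion term to expose the common factor of $u$ on the right, and then invoke ODE uniqueness for the resulting scalar equation whose coefficients genuinely depend on $\phi$; the sign argument for (5) is then a clean consequence of the nonnegativity $u\ge 0$ and the sign structure of $\partial_x K$.
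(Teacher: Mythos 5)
Your proposal is correct and takes essentially the same route as the paper: compute $\partial_\alpha X=\exp(\int_0^t(\partial_{x}K*\partial_x u)\,ds)>0$ for (1)--(3), rewrite the equation in nonconservative form so that every right-hand term carries a factor of $u$ and conclude (4) by ODE theory along characteristics, and prove (5) by combining the first inclusion with the sign of $(\partial_x K*u)$ at the edges of the support using $u\ge 0$ and $\partial_x K(x)=-\mathrm{sign}(x)e^{-|x|}$. The only cosmetic difference is in (4), where the paper absorbs the extra factor of $u$ in $u^2u_{xx}$ into the coefficient, obtaining a linear ODE it integrates explicitly, while you keep the Riccati form $\phi'=a\phi+b\phi^2$ and invoke uniqueness; both arguments are valid.
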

\begin{proof}
Property (1) is rather standard. Note in particular that
\begin{align*}
 \frac {\partial X(t,\alpha)} {\partial \alpha} = \exp \left(  \int_0^t (\partial_x K*\partial_x u)(X(s,\alpha),s) ds \right)
\end{align*}
is smooth and globally invertible due to the fact that $u$ is
smooth. Property (2) is also trivial. Assume not true, then by
the intermediate value theorem, for some $\tau >0$ we must
have $X(\tau,\alpha_1)-X(\tau,\alpha_2)=0$ which is a contradiction
to property (1). Property (3) follows immediately from property (1)
and (2). For property (4), one simply uses \eqref{eq1} and
\eqref{eqXta}. Writing $u(t,\alpha) = u(X(t,\alpha),t)$, we have
\begin{align*}
\begin{cases}
\frac d {dt }u(t,\alpha) = \left(-\partial_x K*\partial_x u + u \partial_{xx} u+ 2 (\partial_x u)^2 \right) u(t,\alpha), \\
u(0,\alpha) = u_0(\alpha).
\end{cases}
\end{align*}
On the RHS of the above equation, one can regard the coefficient of
$u(t,\alpha)$ as given functions of $(t,\alpha)$. By direct integration, property (4) now follows immediately. For property (5) we
take $y<x(t,-L)$, then by property (1) and (2) we have
$y=X(t,\alpha)$ for some $\alpha<-L$. By property (4) since
$u_0(\alpha)=0$, we must have $u(t,y)=u(t,X(t,\alpha))=0$. Similarly
once can show that if $y>X(t,L)$ then $u(t,y)=0$. This shows that
$supp(u(t,\cdot)) \subset [X(t,-L),X(t,L)]$. Now using this fact and
\eqref{eqXta}, we compute
\begin{align} \label{eq4459}
 \frac d {dt} X(t,-L)  & = (\partial_x K*u)(t, X(t,-L))  \notag\\
& = - \int_{\R} sign(X(t,-L)-y) e^{-|X(t,-L)-y|} u(t,y) dy \notag \\
& = - \int_{y\ge X(t,-L)} sign(X(t,-L)-y) e^{-|X(t,-L)-y|} u(t,y) dy \notag\\
& \ge 0.
\end{align}
By a similar computation one can show that $\frac d {dt} X(t,L) \le 0$. These facts together with property 2 and 3 above easily yield
that $[X(t,-L), X(t,L)] \subset [-L,L]$.
\end{proof}
Using lemma \ref{lem_Xta}, we are able to finish the
\begin{proof}[Proof of Theorem \ref{thm4}]
We argue by contradiction. Let $u_0 \in C_c^\infty(\R)$, $u_0 \ge 0$
with $supp(u_0) \subset [-L,L]$ for some $L>0$. Assume that the
corresponding solution $u$ is global. Define the characteristic
$X(t,\alpha)$ according to \eqref{eqXta}. By Lemma \ref{lem_Xta} we
have
\begin{align} \label{eq5503}
supp(u(t,\cdot)) \subset [X(t,-L),X(t,L)] \subset [-L,L],\quad \forall\, t\ge 0.
\end{align}
By this fact, we then compute using \eqref{eq4459},
\begin{align*}
  \frac d {dt} X(t,-L)  &= - \int_{y\ge X(t,-L)} sign(X(t,-L)-y) e^{-|X(t,-L)-y|} u(t,y) dy \\
 & = -\int_{ X(t,-L) \le y \le X(t,L)} sign(X(t,-L)-y) e^{-|X(t,-L)-y|} u(t,y) dy\\
& = \int_{ X(t,-L) \le y \le X(t,L)} e^{-|X(t,-L)-y|} u(t,y) dy\\
& \ge e^{-2L} \|u_0\|_{L^1},
\end{align*}
where in the last step we used the fact that the $L^1$ norm of $u$
is preserved. This shows that $X(t,-L)$ grows linearly with time
which is contradiction to \eqref{eq5503}. Thus we have shown that
the solution $u$ with $u_0$ as initial data cannot exist globally.
Finally \eqref{eq_thm41} and \eqref{eq_thm42} are easy consequences
of Theorem \ref{thm1}, \ref{thm2} and \ref{thm3}. The theorem is
proved.
\end{proof}

\section*{Acknowledgements}
D. Li and  X. Zhang are supported under NSF grant DMS-0635607. X.
Zhang is also supported under NSF grant No. 10601060, Project 973
and the Knowledge Innovation Program of CAS.

\bibliographystyle{amsplain}

\end{document}